\newtheorem{proposition}{Proposition}
\theoremstyle{approximation}
\newcommand{\sir}{\mathrm{SIR}}
\newcommand{\pc}{\mathrm{P_c}}
\newcommand{\pcd}{\mathrm{P_{c_d}}}
\newcommand{\pcb}{\mathrm{P_{c_b}}}
\newcommand{\Eb}{\mathbb{E}}
\newcommand*{\acro}[3][]{\newacronym[#1]{#2}{#2}{#3}}
\begin{document}
\title{Optimizing Joint Probabilistic Caching and Communication for Clustered {D2D} Networks}
 
\author{Ramy Amer,~\IEEEmembership{Student~Member,~IEEE,} Hesham~Elsawy,~\IEEEmembership{Senior~Member,~IEEE,} M.~Majid~Butt,~\IEEEmembership{Senior~Member,~IEEE,} Eduard~A.~Jorswieck,~\IEEEmembership{Senior~Member,~IEEE,}
Mehdi~Bennis,~\IEEEmembership{Senior~Member,~IEEE,}~and~Nicola~Marchetti,~\IEEEmembership{Senior~Member,~IEEE}
 
 \thanks{The material in this paper will be presented in part at Globecom 2018 \cite{amer2018minimizing}.}
\thanks{Ramy Amer and Nicola~Marchetti are with CONNECT Centre for Future Networks, Trinity College Dublin, Ireland. Email:\{ramyr, nicola.marchetti\}@tcd.ie.}
\thanks{Hesham ElSawy is with King Abdullah University for Science and Technology (KAUST), Saudi Arabia. Email: hesham.elsawy@kaust.edu.sa.}
\thanks{M. Majid Butt is with Nokia Bell Labs, France, and CONNECT Centre for Future Networks, Trinity College Dublin, Ireland. Email: Majid.Butt@@tcd.ie.}
\thanks{Eduard A. Jorswieck is with Department of Electrical Engineering and Information Technology, TU Dresden, Germany. Email: eduard.jorswieck@tudresden.de.}
\thanks{Mehdi Bennis is with the Centre for Wireless Communications, University of Oulu, Finland, and the Department of Computer Engineering, Kyung Hee University, South Korea. Email: bennis@ee.oulu.fi.}
\thanks{This publication has emanated from research conducted with the financial support of Science Foundation Ireland (SFI) and is co-funded under the European Regional Development Fund under Grant Number 13/RC/2077.}
}

\maketitle			
\begin{abstract} 
Caching at mobile devices and leveraging \ac{D2D} communication are two promising approaches to support massive content delivery over wireless networks. The analysis of such \ac{D2D} caching networks based on a physical interference model is usually carried out by assuming that devices are uniformly distributed. However, this approach does not fully consider and characterize the fact that devices are usually grouped into clusters. Motivated by this fact, this paper presents a comprehensive performance analysis and joint communication and caching optimization for a clustered \ac{D2D} network. Devices are distributed according to a \ac{TCP}  and are assumed to have a surplus memory which is exploited to proactively cache files  from a known library, following a random probabilistic caching scheme. Devices can retrieve the requested files from their caches, from neighboring devices in their proximity (cluster), or from the base station as a last resort. Three key performance metrics are optimized in this paper, namely, the offloading gain, energy consumption, and latency. Firstly, we maximize the offloading probability of the proposed network by jointly optimizing channel access and caching probability. Secondly, we formulate and solve the energy minimization problem for the proposed model and obtain the optimal probabilistic caching for the minimum energy consumption. Finally, we jointly optimize the caching scheme as well as bandwidth allocation between \ac{D2D} and \ac{BS}-to-Device transmission to minimize the weighted average delay per file request. Employing the \ac{BCD} optimization technique, we propose an efficient iterative algorithm for solving the delay minimization problem. A closed-form solution for the bandwidth allocation sub-problem is also provided. Simulation results show significant improvement in the network performance reaching up to $10\%$, $17\%$, and $300\%$ for the offloading gain, energy consumption, and average delay, respectively compared to the Zipf's caching baseline.
\end{abstract}
\begin{IEEEkeywords}
\ac{D2D} communication, probabilistic caching, offloading gain, energy consumption, delay analysis, stochastic geometry, queuing theory.
\end{IEEEkeywords}
\section{Introduction}
Caching at mobile devices significantly improves system performance by facilitating \ac{D2D} communications, which enhances the spectrum efficiency and alleviate the heavy burden on backhaul links \cite{Femtocaching}. Modeling the cache-enabled heterogeneous networks, including \ac{SBS} and mobile devices, follows two main directions in the literature. The first line of work focuses on the fundamental throughput scaling results by assuming a simple protocol channel model \cite{Femtocaching,golrezaei2014base,8412262}, known as the protocol model, where two devices can communicate if they are within a certain distance. The second line of work, defined as the physical interference model, considers a more realistic model for the underlying physical layer \cite{andreev2015analyzing,cache_schedule}. In the following, we review some of the works relevant to the second line, focusing mainly on the \ac{EE}  and delay analysis of wireless caching networks. 

The physical interference model is based on the fundamental \ac{SIR} metric, and therefore, is applicable to any wireless communication system. Modeling devices' locations as a \ac{PPP}   is widely employed in the literature, especially, in the wireless caching area \cite{andreev2015analyzing,cache_schedule,energy_efficiency,ee_BS,hajri2018energy}. However, a realistic model for \ac{D2D} caching networks requires that a given device typically has multiple proximate devices, where any of them can potentially act as a serving device. This deployment is known as clustered devices deployment, which can be characterized by cluster processes \cite{haenggi2012stochastic}. Unlike the popular \ac{PPP} approach, the authors in \cite{clustered_twc,clustered_tcom,8070464} developed a stochastic geometry based model to characterize the performance of content placement in the clustered \ac{D2D} network. In \cite{clustered_twc}, the authors discuss two strategies of content placement in a \ac{PCP}   deployment. First, when each device randomly chooses its serving device from its local cluster, and secondly, when each device connects to its $k$-th closest transmitting device from its local cluster. The authors characterize the optimal number of \ac{D2D} transmitters that must be simultaneously activated in each cluster to maximize the area spectral efficiency.  The performance of cluster-centric content placement is characterized in \cite{clustered_tcom}, where the content of interest in each cluster is cached closer to the cluster center, such that the collective performance of all the devices in each cluster is optimized. Inspired by the Matern hard-core point process, which captures pairwise interactions between nodes, the authors in \cite{8070464}  devised a novel spatially correlated caching strategy called \ac{HCP} such that the \ac{D2D} devices caching the same content are never closer to each other than the exclusion radius. 


Energy efficiency in wireless caching networks is widely studied in the literature \cite{energy_efficiency,ee_BS,hajri2018energy}.
For example, an optimal caching problem is formulated in \cite{energy_efficiency} to minimize the energy consumption of a wireless network. The authors consider a cooperative wireless caching network where relay nodes cooperate with the devices to cache the most popular files in order to minimize energy consumption. In \cite{ee_BS}, the authors investigate how caching at BSs can improve \ac{EE} of wireless access networks. The condition when \ac{EE} can benefit from caching is characterized, and the optimal cache capacity that maximizes the network \ac{EE} is found. It is shown that \ac{EE} benefit from caching depends on content popularity, backhaul capacity, and interference level. 
The authors in \cite{hajri2018energy} exploit the spatial repartitions of devices and the correlation in their content popularity profiles to improve the achievable EE. The \ac{EE} optimization problem is decoupled into two related subproblems, the first one addresses the issue of content popularity modeling, and the second subproblem investigates the impact of exploiting the spatial repartitions of devices. It is shown that the small base station allocation algorithm improves the energy efficiency and hit probability. However, the problem of \ac{EE} for \ac{D2D} based caching is not yet addressed in the literature.


Recently, the joint optimization of delay and energy in wireless caching is conducted, see, for instance \cite{wu2018energy,huang2018energy,jiang2018energy,yang2018cache}. The authors in \cite{wu2018energy} jointly optimize the delay and energy in a cache-enabled dense small cell network. The authors formulate the energy-delay optimization problem as a mixed integer programming problem, where file placement, device association to the small cells, and power control are jointly considered. To model the energy consumption and end-to-end file delivery-delay tradeoff, a utility function linearly combining these two metrics is used as an objective function of the optimization problem.  An efficient algorithm is proposed to approach the optimal association and power solution, which could achieve the optimal tradeoff between energy consumption and end-to-end file delivery delay. In \cite{huang2018energy}, the authors showed that with caching, the energy consumption can be reduced by extending transmission time. However, it may incur wasted energy if the device never needs the cached content. Based on the random content request delay, the authors study the maximization of \ac{EE} subject to a hard delay constraint in an additive white Gaussian noise channel. It is shown that the \ac{EE} of a system with caching can be significantly improved with increasing content request probability and target transmission rate compared with the traditional on-demand scheme, in which the \ac{BS} transmits content file only after it is requested by the user. However, the problem of energy consumption and joint communication and caching for clustered \ac{D2D} networks is not yet addressed in the literature.

In this paper, we conduct a comprehensive performance analysis and optimization of the joint communication and caching for a clustered \ac{D2D} network, where the devices have unused memory to cache some files, following a random probabilistic caching scheme. Our network model effectively characterizes the stochastic nature of channel fading and clustered geographic locations of devices. Furthermore, this paper emphasizes on the need for considering the traffic dynamics and rate of requests when studying the delay incurred to deliver requests to devices. To the best of our knowledge, our work is the first in the literature that conducts a comprehensive spatial analysis of a doubly \ac{PCP}  (also called doubly \ac{PPP}  \cite{haenggi2012stochastic}) with the devices adopting a slotted-ALOHA random access technique to access a shared channel. The key advantage of adopting the slotted-ALOHA access protocol is that it is a simple yet fundamental medium access control (MAC) protocol, wherein no central controller exists to schedule the users' transmissions. 
We also incorporate the spatio-temporal analysis in wireless caching networks by combining tools from stochastic geometry and queuing theory in order to analyze and minimize the average delay (see, for instance, \cite{zhong2015stability,stamatiou2010random,zhong2017heterogeneous,7917340,kim2017ultra}).
The main contributions of this paper are summarized below.

\begin{itemize}
\item We consider a Thomas cluster process (TCP)   where the devices are spatially distributed as groups in clusters. The cluster centers are drawn from a parent PPP, and the cluster members are normally distributed around the centers, forming a Gaussian PPP. This organization of the parent and offspring PPPs forms the so-called doubly PPP.
\item We conduct the coverage probability analysis where the devices adopt a slotted-ALOHA random access technique. We then jointly optimize the  access probability and caching probability to maximize the cluster offloading gain. We obtain the optimal channel access probability, and then a closed-form solution of the optimal caching sub-problem is provided. The energy consumption problem is then formulated and shown to be convex and the optimal caching probability is also formulated. 
\item By combining tools from stochastic geometry as well as queuing theory, 
we minimize the per request weighted average delay by jointly optimizing bandwidth allocation between \ac{D2D} and \ac{BS}-to-Device communication and the caching probability. The delay minimization problem is shown to be non-convex. Applying the block coordinate descent (BCD) optimization technique, the joint minimization problem is solved in an iterative manner.
\item We validate our theoretical findings via simulations. Results show a significant improvement in the network performance metrics, namely, the offloading gain, energy consumption, and average delay as compared to other caching schemes proposed earlier in literature. 
\end{itemize}

The rest of this paper is organized as follows. Section II and Section III discuss the system model and the offloading gain,  respectively. The energy consumption is discussed in Section IV and the delay analysis is conducted in Section V. Numerical results are then presented in Section VI before we conclude the paper in Section VII. 

\section{System Model}

\subsection{System Setup}
We model the location of the mobile devices with a \ac{TCP}  in which the parent points are drawn from a \ac{PPP} $\Phi_p$ with density $\lambda_p$, and the daughter points are drawn from a Gaussian \ac{PPP} around each parent point. In fact, the TCP is considered as a doubly \ac{PCP}  where the daughter points  are normally scattered with variance $\sigma^2 \in \mathbb{R}^2$ around each parent point \cite{haenggi2012stochastic}. 
 The parent points and offspring are referred to as cluster centers and cluster members, respectively. The number of cluster members in each cluster is a Poisson random variable with mean $\overline{n}$. The density function of the location of a cluster member relative to its cluster center is
\begin{equation}
f_Y(y) = \frac{1}{2\pi\sigma^2}\textrm{exp}\Big(-\frac{\lVert y\rVert^2}{2\sigma^2}\Big),	\quad \quad y \in \mathbb{R}^2
\label{pcp}
\end{equation}
where $\lVert .\rVert$ is the Euclidean norm. The intensity function of a cluster is given by $\lambda_c(y) = \frac{\overline{n}}{2\pi\sigma^2}\textrm{exp}\big(-\frac{\lVert y\rVert^2}{2\sigma^2}\big)$. Therefore, the intensity of the entire process is given by $\lambda = \overline{n}\lambda_p$. We assume that the BSs' distribution follows another \ac{PPP} $\Phi_{bs}$ with density $\lambda_{bs}$, which is independent of $\Phi_p$.

\subsection{Content Popularity and Probabilistic Caching Placement}
We assume that each device has a surplus memory of size $M$ designated for caching files. 
The total number of files is $N_f> M$ and the set (library) of content indices is denoted as $\mathcal{F} = \{1, 2, \dots , N_f\}$. These files represent the content catalog that all the devices in a cluster may request, which are indexed in a descending order of popularity. The probability that the $i$-th file is requested follows a Zipf's distribution given by,
\begin{equation}
q_i = \frac{ i^{-\beta} }{\sum_{k=1}^{N_f}k^{-\beta}},
\label{zipf}
\end{equation}
where $\beta$ is a parameter that reflects how skewed the popularity distribution is. For example, if $\beta= 0$, the popularity of the files has a uniform distribution. Increasing $\beta$ increases the disparity among the files popularity such that lower indexed files have higher popularity. By definition, $\sum_{i=1}^{N_f}q_i = 1$. 
We use Zipf's distribution to model the popularity of files per cluster.

\ac{D2D} communication is enabled within each cluster to deliver popular content. It is assumed that the devices adopt a slotted-ALOHA medium access protocol, where each transmitter during each time slot, independently and randomly accesses the channel with the same probability $p$. This implies that multiple active \ac{D2D} links might coexist within a cluster. Therefore, $p$ is a design parameter that directly controls \textcolor{blue}{(mainly)} the intra-cluster interference, as described later in the paper. 

We adopt a random content placement where each device independently selects a file to cache according to a specific probability function $\textbf{b} = \{b_1, b_2, \dots, b_{N_{f}}\}$, where $b_i$ is the probability that a device caches the $i$-th file, $0 \leq b_i \leq 1$ for all $i=\{1, \dots, N_f\}$. To avoid duplicate caching of the same content within the memory of the same device, we follow a probabilistic caching approach proposed in \cite{blaszczyszyn2015optimal} and illustrated in Fig. \ref{prob_cache_example}.	
\begin{figure}
	\begin{center}
		\includegraphics[width=2.5in]{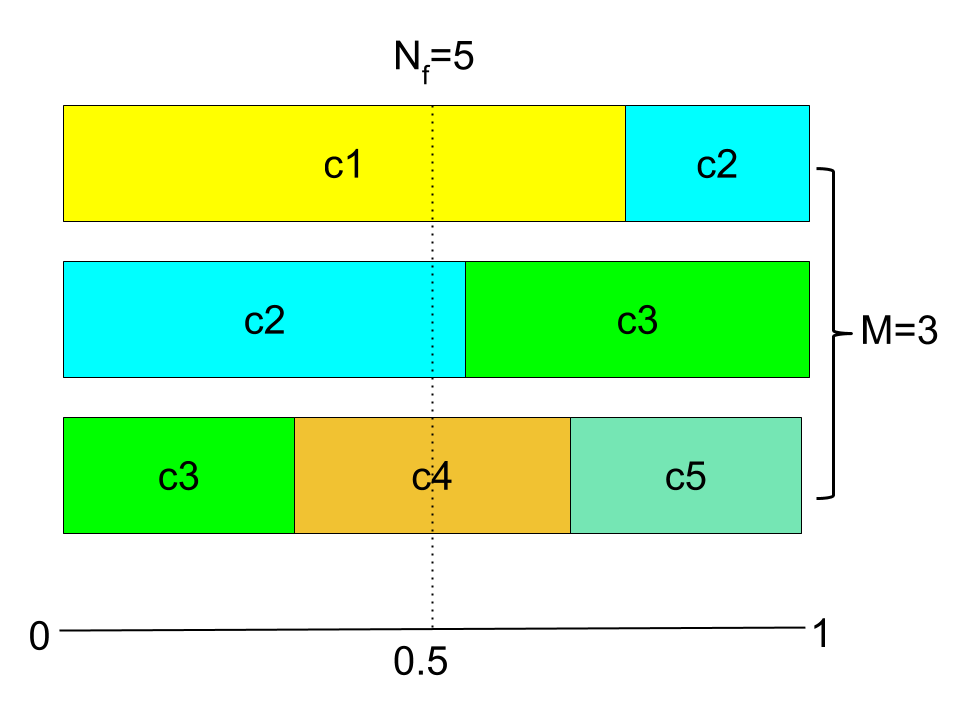} 
		\caption {The cache memory of size $M = 3$ is equally divided into $3$ blocks of unit size. Starting from content $i=1$ to $i=N_f$, each content sequentially fills these $3$ memory blocks by an amount $b_i$. The amounts (probabilities) $b_i$ eventually fill all $3$ blocks since $\sum_{i=1}^{N_f} b_i = M$ \cite{blaszczyszyn2015optimal}. Then a random number $\in [0,1]$ is generated, and content $i$ is chosen from each block, whose $b_i$ fills the part intersecting with the generated random number. In this way, in the given example, the contents $\{1, 2, 4\}$ are chosen to be cached.}
		\label{prob_cache_example}
	\end{center}
\end{figure}

If a device caches the desired file, the device directly retrieves the content. However, if the device does not cache the file, the file can be downloaded from any neighboring device that caches the file (henceforth called catering device) in the same cluster. According to the proposed access model, the probability that a chosen catering device is admitted to access the channel is the access probability $p$. Finally, the device attaches to the nearest \ac{BS} as a last resort to download the content which is not cached entirely within the device's cluster. We assume that the \ac{D2D} communication is operating as out-of-band D2D. \textcolor{blue}{$W_{1}$ and $W_{2}$ denote respectively the bandwidth allocated to the \ac{D2D} and \ac{BS}-to-Device communication, and the total system bandwidth is denoted as $W=W_{1} + W_{2}$. It is assumed that device requests are served in a random manner, i.e., among the cluster devices, a random device request is chosen to be scheduled and content is served.}

In the following, we aim at studying and optimizing three important metrics, widely studied in the literature. The first metric is the offloading gain, which is defined as the probability of obtaining the requested file from the local cluster, either from the self-cache or from a neighboring device in the same cluster, with a rate higher than a required threshold $R_0$. The second metric is the energy consumption which represents the dissipated energy when downloading files either from the BSs or via \ac{D2D} communication. Finally, the latency which accounts for the weighted average delay over all the requests served from the \ac{D2D} and \ac{BS}-to-Device communication.  
\section{Maximum Offloading Gain} 

\begin{figure}[t]
\centering
\includegraphics[width=0.55\textwidth]{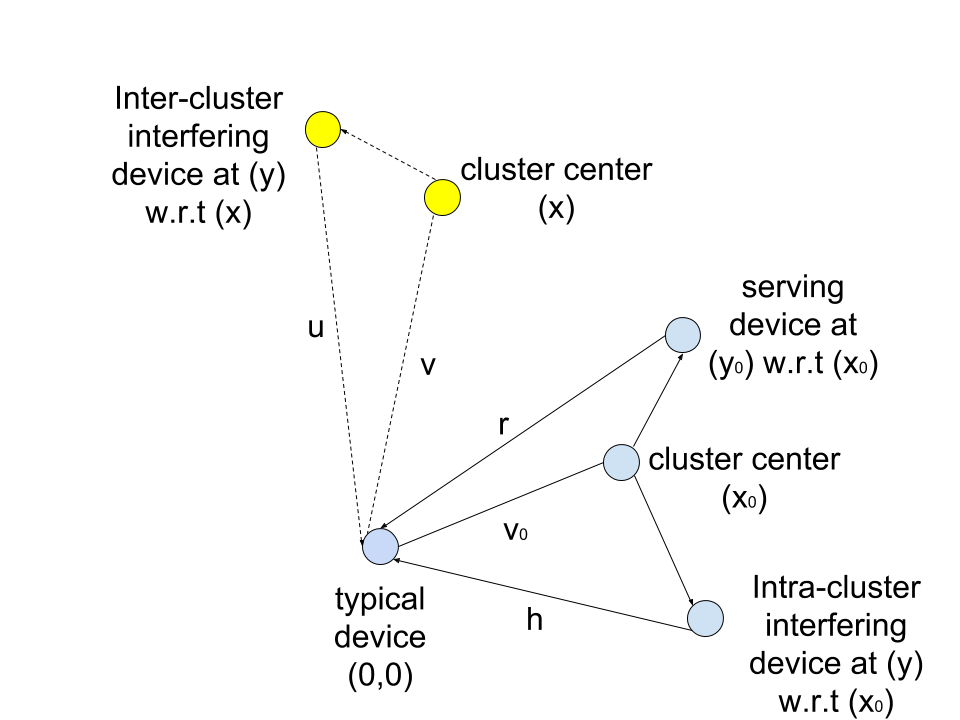}	
\caption {Illustration of the representative cluster and one interfering cluster.}
\label{distance}
\end{figure}
Without loss of generality, we conduct the analysis for a cluster whose center is at $x_0\in \Phi_p$ (referred to as representative cluster), and the device who requests the content (henceforth called typical device) is located at the origin. We denote the location of the \ac{D2D} transmitter by $y_0$ relative to $x_0$, where $x_0, y_0\in \mathbb{R}^2$. The distance from the typical device (\ac{D2D} receiver of interest) to this \ac{D2D} transmitter is denoted as $r=\lVert x_0+y_0\rVert$, which is a realization of a random variable $R$ whose distribution is described later. This setup is illustrated in Fig. \ref{distance}. It is assumed that a requested file is served from a randomly selected catering device, which is, in turn, admitted to access the channel based on the slotted-ALOHA protocol. The successful offloading probability is then given by
 \begin{align}
 \label{offloading_gain}
 \mathbb{P}_o(p,\textbf{b}) &= \sum_{i=1}^{N_f} q_i b_i + q_i(1 - b_i)(1 - e^{-b_i\overline{n}})
 \underbrace{\int_{r=0}^{\infty}f_R(r) \mathbb{P}(R_{1}(r)>R_0) \dd{r}}_{\mathbb{P}(R_1>R_0)},
 \end{align}
where $R_{1}(r)$ is the achievable rate when downloading content from a catering device at a distance $r$ from the typical device with \ac{PDF} $f_R(r)$. The first term on the right-hand side is the probability of requesting a locally cached file (self-cache) whereas the remaining term incorporates the probability that a requested file $i$ is cached among at least one cluster member and being downloadable with a rate greater than $R_0$. More precisely, since the number of devices per cluster has a Poisson distribution, the probability that there are $k$ devices per cluster is equal to $\frac{\overline{n}^k e^{-\overline{n}}}{k!}$. Accordingly, the probability that there are $k$ devices caching content $i$ can be written as $\frac{(b_i\overline{n})^k e^{-b_i\overline{n}}}{k!}$. Hence, the probability that at least one device caches content $i$ is $1$-minus the void probability (i.e., $k=0$), which equals $1 - e^{-b_i\overline{n}}$.

In the following, we first compute the probability $ \mathbb{P}(R_{1}(r)>R_0)$ given the distance $r$ between the typical device and a catering device, \textcolor{blue}{then we conduct averaging over $r$ using the \ac{PDF} $f_R(r)$}. The received power at the typical device from a catering device located at $y_0$ relative to the cluster center is given by 
\begin{align}
P &= P_d  g_0 \lVert x_0+y_0\rVert^{-\alpha}= P_d  g_0 r^{-\alpha}			
\label{pwr}
\end{align}
where $P_d$ denotes the \ac{D2D} transmission power, $g_0$ is the complex Gaussian fading channel coefficient between a catering device located at $y_0$ relative to its cluster center at $x_0$ and the typical device, and $\alpha > 2$ is the path loss exponent. Under the above assumption, the typical device sees two types of interference, namely, the intra-and inter-cluster interference. We first describe the inter-cluster interference, then the intra-cluster interference is characterized. The set of active devices in any remote cluster is denoted as $\mathcal{B}^p$, where $p$ refers to the access probability. Similarly, the set of active devices in the local cluster is denoted as $\mathcal{A}^p$. Similar to (\ref{pwr}), the interference from the simultaneously active \ac{D2D} transmitters outside the representative cluster, at the typical device is given by
\begin{align}
I_{\Phi_p^{!}} &= \sum_{x \in \Phi_p^{!}} \sum_{y\in \mathcal{B}^p}   P_d g_{y_x}  \lVert x+y\rVert^{-\alpha}\\
& =  \sum_{x\in \Phi_p^{!}} \sum_{y\in \mathcal{B}^p}  P_d g_{u}  u^{-\alpha}
\end{align}
where $\Phi_p^{!}=\Phi_p \setminus x_0$ for ease of notation, $y$ is the marginal distance between a potential interfering device and its cluster center at $x \in \Phi_p$ , $u = \lVert x+y\rVert$ is a realization of a random variable $U$ modeling the inter-cluster interfering distance (shown in Fig. \ref{distance}), $g_{y_x} \sim $ exp(1) are \ac{i.i.d.} exponential random variables modeling Rayleigh fading, and $g_{u} = g_{y_x}$ for ease of notation. The intra-cluster interference is then given by
\begin{align}
I_{\Phi_c} &=  \sum_{y\in \mathcal{A}^p}   P_d g_{y_{x_0}}  \lVert x_0+y\rVert^{-\alpha}\\
& =   \sum_{y\in \mathcal{A}^p}  P_d g_{h}  h^{-\alpha}
\end{align}
where $y$ is the marginal distance between the intra-cluster interfering devices  and the cluster center at $x_0 \in \Phi_p$, $h = \lVert x_0+y\rVert$ is a realization of a random variable $H$ modeling the intra-cluster interfering distance (shown in Fig. \ref{distance}), $g_{y_{x_0}} \sim $ exp(1) are \ac{i.i.d.} exponential random variables modeling Rayleigh fading, and $g_{h} = g_{y_{x_0}}$ for ease of notation. From the thinning theorem \cite{haenggi2012stochastic}, the set of active transmitters following the slotted-ALOHA medium access forms Gaussian \ac{PPP} $\Phi_{cp}$ whose intensity is given by
 \begin{align}
\lambda_{cp} = p\lambda_{c}(y) = p\overline{n}f_Y(y) =\frac{p\overline{n}}{2\pi\sigma^2}\textrm{exp}\Big(-\frac{\lVert y\rVert^2}{2\sigma^2}\Big),	\quad		y \in \mathbb{R}^2
 \end{align}
 Assuming that the thermal noise is neglected as compared to the aggregate interference, the $\sir$ at the typical device is written as 
\begin{equation}
\gamma_{r}=\frac{P}{I_{\Phi_p^{!}} + I_{\Phi_c}} = \frac{P_d  g_0 r^{-\alpha}}{I_{\Phi_p^{!}} + I_{\Phi_c}}
\end{equation}
\textcolor{black}{A fixed rate transmission model is adopted in our study}, where each transmitter (\ac{D2D} or BS) transmits at the fixed rate of log$_2[1+\theta]$ \SI{}{bits/sec/Hz}, where $\theta$ is a design parameter. Since, the rate is fixed, the transmission is subject to outage due to fading and interference fluctuations. Consequently, the de facto average transmissions rate (i.e., average throughput) is given by 
\begin{equation}
\label{rate_eqn}
R = W\textrm{ log$_{2}$}[1+ \theta]\pc,			
\end{equation}
where $W$ is the bandwidth, $\theta$ is the pre-determined threshold for successful reception, $\pc =\mathbb{E}(\textbf{1}\{\sir>\theta\})$ is the coverage probability, and $\textbf{1}\{.\}$ is the indicator function. \textcolor{blue}{When served by a catering device $r$ apart from the origin, the achievable rate of the typical device under slotted-ALOHA medium access technique can be deduced from \cite[Equation (10)]{jing2012achievable} as}
 \begin{equation}
\label{rate_eqn1}
R_{1}(r) = pW_{1} {\rm log}_2 \big(1 + \theta \big) \textbf{1}\{ \gamma_{r} > \theta\}
\end{equation}
Then, the probability $ \mathbb{P}(R_{1}(r)>R_0)$ is derived as follows.
\begin{align}
\mathbb{P}(R_{1}(r)>R_0)&= \mathbb{P} \big(pW_{1} {\rm log}_2 (1 + \theta)\textbf{1}\{ \gamma_{r} > \theta\}
>R_0\big)	\nonumber  \\
&=\mathbb{P} \big(\textbf{1}\{ \gamma_{r} > \theta\}
>\frac{R_0}{pW_{1} {\rm log}_2 (1 + \theta )}\big) \nonumber  \\
&\overset{(a)}{=} \mathbb{E}\big(\textbf{1}\{ \gamma_{r} > \theta\}\big) = \mathbb{P}\big(\frac{P_d  g_0 r^{-\alpha}}{I_{\Phi_p^{!}} + I_{\Phi_c}} > \theta\big)
\nonumber  \\
&= \mathbb{P}\big( g_0 >  \frac{\theta r^{\alpha}}{P_d} [I_{\Phi_p^{!}} + I_{\Phi_c}]  \big)
\nonumber  \\
&\overset{(b)}{=}\mathbb{E}_{I_{\Phi_p^{!}},I_{\Phi_c}}\Big[\text{exp}\big(\frac{-\theta r^{\alpha}}{P_d}{[I_{\Phi_p^{!}} + I_{\Phi_c}] }\big)\Big]
\nonumber  \\
\label{prob-R1-g-R0}
 &\overset{(c)}{=}  \mathscr{L}_{I_{\Phi_p^{!}}}\big(s=\frac{\theta r^{\alpha}}{P_d}\big) \mathscr{L}_{I_{\Phi_c}} \big(s=\frac{\theta r^{\alpha}}{P_d}\big)
\end{align}
\textcolor{blue}{where (a) follows from the assumption that $R_0 < pW_{1} {\rm log}_2 \big(1 + \theta \big)$ always holds, otherwise, it is infeasible to get $\mathbb{P}(R_{1}(r)>R_0)$ greater than zero}. (b) follows from the fact that $g_o$ follows an exponential distribution, and (c) follows from the independence of the intra- and inter-cluster interference and the Laplace transform of them. 
In what follows, we first derive the Laplace transform of interference to get $\mathbb{P}(R_{1}(r)>R_0)$. Then, we formulate the offloading gain maximization problem.
\begin{lemma}
The Laplace transform of the inter-cluster aggregate interference $I_{\Phi_p^{!}}$ evaluated at $s=\frac{\theta r^{\alpha}}{P_d}$ is given by 
\begin{align}
 \label{LT_inter}
\mathscr{L}_{I_{\Phi_p^{!}}}(s) &= {\rm exp}\Big(-2\pi\lambda_p \int_{v=0}^{\infty}\Big(1 -  {\rm e}^{-p\overline{n} \varphi(s,v)}\Big)v\dd{v}\Big),
\end{align}
where $\varphi(s,v) = \int_{u=0}^{\infty}\frac{s}{s+ u^{\alpha}}f_U(u|v)\dd{u}$, $f_U(u|v) = \mathrm{Rice} (u| v, \sigma)$ represents Rice's \ac{PDF} of parameter $\sigma$, and $v=\lVert x\rVert$.
\end{lemma}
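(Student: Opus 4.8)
The plan is to exploit the doubly-Poisson structure of the Thomas cluster process by applying the probability generating functional (PGFL) twice: once over the active offspring within each interfering cluster, and once over the parent process of cluster centres. I would begin by writing
\begin{align*}
\mathscr{L}_{I_{\Phi_p^{!}}}(s) = \Eb\Big[\exp\Big(-s\sum_{x\in\Phi_p^{!}}\sum_{y\in\mathcal{B}^p}P_d\, g_u\, u^{-\alpha}\Big)\Big],
\end{align*}
and condition on the parent PPP $\Phi_p$. Since the offspring processes attached to distinct parents are independent and the fading marks $g_u\sim\exp(1)$ are i.i.d.\ and independent of every point process, the expectation factorises into a product of per-cluster terms indexed by the centres $x\in\Phi_p^{!}$.

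For a fixed centre $x$ at distance $v=\lVert x\rVert$ from the origin, the per-cluster factor is an expectation over the thinned offspring PPP $\Phi_{cp}$ of intensity $\lambda_{cp}(y)=p\overline{n}f_Y(y)$ and over the fading. I would first average out $g_u$ using $\Eb_{g}[\exp(-s P_d g_u u^{-\alpha})]=(1+sP_d u^{-\alpha})^{-1}$, and then invoke the PGFL of the offspring PPP, giving
\begin{align*}
\exp\Big(-\int_{\Rs}\Big(1-\frac{1}{1+sP_d u^{-\alpha}}\Big)\lambda_{cp}(y)\dd{y}\Big),
\end{align*}
where $u=\lVert x+y\rVert$ and, with the power normalised to unity, the integrand kernel reduces to $s/(s+u^{\alpha})$.

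The crux of the argument --- and the step I expect to be the main obstacle --- is converting the planar integral over the offspring displacement $y\in\Rs$ into a one-dimensional integral over the scalar interfering distance $u=\lVert x+y\rVert$. Because $y$ is an isotropic two-dimensional Gaussian of variance $\sigma^2$ while the centre $x$ is offset from the origin by $v$, the vector $x+y$ is a Gaussian with \emph{displaced} mean, so its norm $u$ is Rice-distributed: $f_U(u\,|\,v)=\mathrm{Rice}(u\,|\,v,\sigma)$. Performing this change of variables collapses the per-cluster factor to $\exp(-p\overline{n}\,\varphi(s,v))$ with $\varphi(s,v)=\int_{0}^{\infty}\frac{s}{s+u^{\alpha}}f_U(u\,|\,v)\dd{u}$.

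Finally I would apply the PGFL of the parent PPP $\Phi_p$ (density $\lambda_p$) to the product over $x$ and switch to polar coordinates, $v=\lVert x\rVert$ and $\dd{x}=2\pi v\dd{v}$, to obtain
\begin{align*}
\mathscr{L}_{I_{\Phi_p^{!}}}(s) = \exp\Big(-2\pi\lambda_p\int_{0}^{\infty}\big(1-e^{-p\overline{n}\,\varphi(s,v)}\big)v\dd{v}\Big),
\end{align*}
as claimed. Beyond the bookkeeping, the only genuinely delicate point is the Rice identification of $f_U(u\,|\,v)$; the two PGFL steps and the exponential-fading average are otherwise routine for Poisson interference fields.
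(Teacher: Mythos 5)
Your proposal is correct and follows essentially the same route as the paper's Appendix A: condition on the parent PPP, average the exponential fading to get the $\frac{1}{1+su^{-\alpha}}$ kernel, apply the PGFL of the thinned offspring Gaussian PPP of intensity $p\overline{n}f_Y(y)$, apply the PGFL of the parent PPP, and convert the displacement integral to a one-dimensional integral via the Rice distribution of $u=\lVert x+y\rVert$ given $v=\lVert x\rVert$. The only cosmetic difference is that you defer the parent PGFL until after identifying the Rice density, whereas the paper applies it first and then changes variables; both orderings are valid.
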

%
\begin{proof}
Please see Appendix A.
\end{proof}

\begin{lemma}
The Laplace transform of the intra-cluster aggregate interference $I_{\Phi_c}$ evaluated at $s=\frac{\theta r^{\alpha}}{P_d}$ can be approximated by
\begin{align}   
\label{LT_intra}
         \mathscr{L}_{I_{\Phi_c} }(s) \approx  {\rm exp}\Big(-p\overline{n} \int_{h=0}^{\infty}\frac{s}{s+ h^{\alpha}}f_H(h)\dd{h}\Big)
\end{align} 
where $f_H(h) =\mathrm{Rayleigh}(h,\sqrt{2}\sigma)$ represents Rayleigh's \ac{PDF} with a scale parameter $\sqrt{2}\sigma$. 
\end{lemma}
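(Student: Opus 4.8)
The plan is to evaluate $\mathscr{L}_{I_{\Phi_c}}(s)=\mathbb{E}\big[\exp(-s I_{\Phi_c})\big]$ by the same recipe used for Lemma~1, but now exploiting that every intra-cluster interferer shares a single common cluster center $x_0$. First I would fix the Palm description of the typical cluster member: conditioned on the typical device sitting at the origin, its own cluster center lies at $x_0$, and because the member is a Gaussian displacement of variance $\sigma^2$ from that center, the vector $x_0$ (center relative to the typical device) is itself $\mathcal{N}(0,\sigma^2\mathbf{I}_2)$, so $\lVert x_0\rVert$ is Rayleigh with scale $\sigma$. By the Poisson structure of the cluster, the remaining members of the same cluster still form an inhomogeneous PPP of intensity $\overline{n}f_Y(\cdot)$ around $x_0$; after slotted-ALOHA thinning the active interferers $\mathcal{A}^p$ constitute a PPP of intensity $\lambda_{cp}(y)=p\overline{n}f_Y(y)$.

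Next I would condition on $x_0$ and peel off the fading. Each active interferer at $x_0+y$ contributes $P_d g_h h^{-\alpha}$ with $h=\lVert x_0+y\rVert$ and $g_h\sim\mathrm{exp}(1)$ i.i.d., so averaging a single term over $g_h$ gives $\mathbb{E}_{g_h}[\mathrm{e}^{-sP_d g_h h^{-\alpha}}]=(1+sP_d h^{-\alpha})^{-1}$; with $P_d$ normalized to unity this yields the kernel $1-(1+sh^{-\alpha})^{-1}=\tfrac{s}{s+h^{\alpha}}$ used throughout. Applying the PGFL of the active PPP conditioned on $x_0$ then gives
\[
\mathscr{L}_{I_{\Phi_c}}(s\mid x_0)=\exp\!\Big(-p\overline{n}\!\int_{h=0}^{\infty}\tfrac{s}{s+h^{\alpha}}\,f_H(h\mid \lVert x_0\rVert)\,\dd{h}\Big),
\]
where, for fixed $x_0$, the interfering distance $h=\lVert x_0+y\rVert$ with $y\sim\mathcal{N}(0,\sigma^2\mathbf{I}_2)$ is Rice-distributed with parameters $(\lVert x_0\rVert,\sigma)$, in exact analogy with $f_U(u\mid v)$ of Lemma~1.

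The one genuinely delicate point — and the source of the ``$\approx$'' in the statement — is the outer average over $x_0$. Because all intra-cluster interferers are offset from the \emph{same} center, their distances to the typical device are correlated, so the exact transform is $\mathbb{E}_{x_0}\big[\mathscr{L}_{I_{\Phi_c}}(s\mid x_0)\big]$, an expectation of an exponential of an integral that does not reduce to closed form. I would adopt the standard decoupling approximation: treat the interfering distances as i.i.d.\ draws from their common marginal law, equivalently pull the expectation over $x_0$ inside the exponent (a Jensen-type step), replacing the Rice kernel $f_H(h\mid\lVert x_0\rVert)$ by the marginal $f_H(h)=\int f_H(h\mid v)\,f_{\lVert x_0\rVert}(v)\,\dd{v}$. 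Identifying this marginal is the final computation: since the interferer position relative to the origin is $x_0+y=-z+y$ with $z$ (the typical device's own displacement) and $y$ independent $\mathcal{N}(0,\sigma^2\mathbf{I}_2)$, the sum is $\mathcal{N}(0,2\sigma^2\mathbf{I}_2)$, whose norm is Rayleigh with scale $\sqrt{2}\sigma$. Substituting $f_H(h)=\mathrm{Rayleigh}(h,\sqrt{2}\sigma)$ then delivers the claimed expression. The main obstacle is thus not the PGFL manipulation but justifying the decoupling of the shared-center distances; I would state clearly that this is the step rendering the result an approximation and, if space allows, note that its accuracy is confirmed against simulation.
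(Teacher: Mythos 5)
Your proposal is correct and follows essentially the same route as the paper's Appendix B: Rayleigh-fading average to obtain the kernel $\frac{s}{s+h^{\alpha}}$, PGFL of the thinned Gaussian PPP conditioned on the cluster-center location, and then the decoupling approximation that replaces the conditional Rice distance density by its Rayleigh$(\sqrt{2}\sigma)$ marginal. If anything, you are more explicit than the paper about why neglecting the shared offset $x_0$ is the sole source of the ``$\approx$'' and why the marginal of $x_0+y$ (a sum of two independent $\mathcal{N}(0,\sigma^2\mathbf{I}_2)$ vectors) has Rayleigh norm with scale $\sqrt{2}\sigma$.
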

\begin{proof}
Please see Appendix B.
\end{proof}

For the serving distance distribution $f_R(r)$, since both the typical device as well as a potential catering device have their locations drawn from a normal distribution with variance $\sigma^2$ around the cluster center, then by definition, the serving distance has a Rayleigh distribution with parameter $\sqrt{2}\sigma$, and given by
  \begin{align}
\label{rayleigh}
f_R(r)=  \frac{r}{2 \sigma^2} {\rm e}^{\frac{-r^2}{4 \sigma^2}}, 	\quad	r>0
 \end{align}

From (\ref{LT_inter}), (\ref{LT_intra}), and (\ref{rayleigh}), the offloading gain in (\ref{offloading_gain}) is written as 
 \begin{align}
 \label{offloading_gain_1}
 \mathbb{P}_o(p,\textbf{b}) &= \sum_{i=1}^{N_f} q_i b_i + q_i(1 - b_i)(1 - e^{-b_i\overline{n}}) \underbrace{\int_{r=0}^{\infty} 
 \frac{r}{2 \sigma^2} {\rm e}^{\frac{-r^2}{4 \sigma^2}} 
 \mathscr{L}_{I_{\Phi_p^{!}}}\big(s=\frac{\theta r^{\alpha}}{P_d}\big) \mathscr{L}_{I_{\Phi_c}} \big(s=\frac{\theta r^{\alpha}}{P_d}\big)\dd{r}}_{\mathbb{P}(R_1>R_0)},				
 \end{align}
Hence, the offloading gain maximization problem can be formulated as
\begin{align}
\label{optimize_eqn_p}
\textbf{P1:}		\quad &\underset{p,\textbf{b}}{\text{max}} \quad \mathbb{P}_o(p,\textbf{b}) \\
\label{const110}
&\textrm{s.t.}\quad  \sum_{i=1}^{N_f} b_i = M,   \\
\label{const111}
&  b_i \in [ 0, 1],  \\
\label{const112}
&  p \in [ 0, 1], 
\end{align}	
where (\ref{const110}) is the device cache size constraint, which is consistent with the illustration of the example in Fig. \ref{prob_cache_example}. \textcolor{black}{On one hand, from the assumption that the fixed transmission rate $pW_{1} {\rm log}_2 \big(1 + \theta \big)$ being larger than the required threshold $R_0$, we have the condition $p>\frac{R_0}{W_{1} {\rm log}_2 \big(1 + \theta \big)}$ on the access probability. On the other hand, from (\ref{prob-R1-g-R0}), with further increase of the access probability $p$, intra- and inter-cluster interference powers increase, and the probability $\mathbb{P}(R_{1}(r)>R_0)$ decreases accordingly. From intuition, the optimal access probability for the offloading gain maximization is chosen as $p^* >_{\epsilon} \frac{R_0}{W_{1} {\rm log}_2 \big(1 + \theta \big)}$, where $\epsilon \to 0$. However, increasing the access probability $p$ further above $p^*$ may lead to higher \ac{D2D} average achievable rate $R_1$, as elaborated in the next section.} The obtained $p^*$ is now used to solve for the caching probability $\textbf{b}$ in the optimization problem below. Since in the structure of \textbf{P1}, $p$ and $\textbf{b}$ are separable, it is possible to solve for $p^*$ and then substitute to get $\textbf{b}^*$.
\begin{align}
\label{optimize_eqn_b_i}
\textbf{P2:}		\quad &\underset{\textbf{b}}{\text{max}} \quad \mathbb{P}_o(p^*,\textbf{b}) \\
\label{const11}
&\textrm{s.t.}\quad  (\ref{const110}), (\ref{const111})   \nonumber 
\end{align}
The optimal caching probability is formulated in the next lemma.	
\begin{lemma}
$\mathbb{P}_o(p^*,\textbf{b})$ is a concave function w.r.t. $\textbf{b}$ and the optimal caching probability $\textbf{b}^{*}$ that maximizes the offloading gain is given by
      \[
    b_{i}^{*}=\left\{
                \begin{array}{ll}
                  1 \quad\quad\quad , v^{*} <   q_i -q_i(1-e^{-\overline{n}})\mathbb{P}(R_1>R_0)\\ 
                  0   \quad\quad\quad, v^{*} >   q_i + \overline{n}q_i\mathbb{P}(R_1>R_0)\\
                 \psi(v^{*}) \quad, {\rm otherwise}
                \end{array}   
              \right.
  \]
where $\psi(v^{*})$ is the solution of $v^{*}$ of (\ref{psii_offload}) in Appendix C that satisfies $\sum_{i=1}^{N_f} b_i^*=M$.
\end{lemma}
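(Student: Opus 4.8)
The plan is to exploit the separable structure of the objective in (\ref{offloading_gain_1}). Once $p=p^{*}$ is fixed, the factor $\mathbb{P}(R_1>R_0)$ inside the sum is a constant that does not depend on $\textbf{b}$, so the objective decomposes as $\mathbb{P}_o(p^{*},\textbf{b})=\sum_{i=1}^{N_f} f_i(b_i)$ with $f_i(b_i)=q_i b_i + q_i\mathbb{P}(R_1>R_0)(1-b_i)(1-e^{-b_i\overline{n}})$ depending only on its own variable. First I would establish concavity by differentiating $f_i$ twice. A direct computation gives $f_i''(b_i)=-q_i\mathbb{P}(R_1>R_0)\,\overline{n}\,e^{-b_i\overline{n}}\big(2+\overline{n}(1-b_i)\big)$. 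For $b_i\in[0,1]$ the parenthesized factor is strictly positive and every other factor is non-negative, so $f_i''(b_i)\le 0$ and each $f_i$ is concave; the sum of concave functions is concave, and the feasible region defined by (\ref{const110})--(\ref{const111}) is a convex polytope. Hence \textbf{P2} is a concave maximization and the \ac{KKT} conditions are both necessary and sufficient for global optimality.

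Next I would form the Lagrangian, attaching a multiplier $v$ to the cache-size equality (\ref{const110}) and multipliers $\mu_i,\nu_i\ge 0$ to the box constraints $b_i\ge 0$ and $b_i\le 1$. Stationarity reads $f_i'(b_i)-v+\mu_i-\nu_i=0$, together with the complementary-slackness conditions $\mu_i b_i=0$ and $\nu_i(b_i-1)=0$. Evaluating the derivative at the endpoints yields the two threshold values $f_i'(1)=q_i-q_i(1-e^{-\overline{n}})\mathbb{P}(R_1>R_0)$ and $f_i'(0)=q_i+\overline{n}q_i\mathbb{P}(R_1>R_0)$. Because $f_i'$ is strictly decreasing (a consequence of concavity), the standard case analysis applies: if $v<f_i'(1)$ the upper constraint is active and $b_i^{*}=1$; if $v>f_i'(0)$ the lower constraint is active and $b_i^{*}=0$; otherwise $b_i^{*}$ is the unique interior root $\psi(v)$ of $f_i'(b_i)=v$, which is exactly the stationarity equation (\ref{psii_offload}) of Appendix C. This reproduces the three-way characterization asserted in the statement.

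The main obstacle, and the step I would treat most carefully, is fixing the common multiplier $v^{*}$ via the cache-size constraint. Since $b_i^{*}(v)$ is continuous and non-increasing in $v$ for every $i$ (as $v$ grows, the decreasing inverse of $f_i'$ forces the interior branch down, and the clamped branches are monotone as well), the aggregate $\sum_i b_i^{*}(v)$ is continuous and non-increasing, sweeping from $N_f$ (all files fully cached for sufficiently small $v$) down to $0$ (no file cached for large $v$). Because $0<M<N_f$, the intermediate value theorem guarantees a $v^{*}$ with $\sum_i b_i^{*}(v^{*})=M$, which is precisely the coupling condition imposed on $\psi(v^{*})$ in the statement.

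Finally, I would remark that the interior branch solves a transcendental equation in $b_i$, so $\psi(v^{*})$ admits no elementary closed form and is obtained numerically, e.g.\ by a one-dimensional bisection on $v^{*}$ that enforces $\sum_i b_i^{*}=M$. I expect the concavity computation and the endpoint-threshold identification to be routine; the care lies in arguing the existence and monotonicity that make $v^{*}$ well defined and the solution feasible.
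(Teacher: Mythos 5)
Your proposal is correct and follows essentially the same route as the paper's Appendix C: establish concavity of the separable objective via the (identical, just factored) second derivative $-q_i\mathbb{P}(R_1>R_0)\,\overline{n}e^{-\overline{n}b_i}\big(2+\overline{n}(1-b_i)\big)\le 0$, then apply the KKT conditions with a multiplier on the cache-size equality and box-constraint multipliers, and read off the three cases from the endpoint values $f_i'(0)$ and $f_i'(1)$. Your additional intermediate-value/monotonicity argument for the existence of $v^{*}$ satisfying $\sum_i b_i^{*}=M$ is a small rigor bonus the paper leaves implicit, but it does not change the method.
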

\begin{proof}
Please see Appendix C.
\end{proof}
\section{Energy Consumption}
In this section, we formulate the energy consumption minimization problem for the clustered \ac{D2D} caching network. In fact, significant energy consumption occurs only when content is served via \ac{D2D} or \ac{BS}-to-Device transmission. We consider the time cost $c_{d_i}$ as the time it takes to download the $i$-th content from a neighboring device in the same cluster. Considering the size ${S}_i$ of the $i$-th ranked content, $c_{d_i}=S_i/R_1 $, where $R_1 $ denotes the average rate of the \ac{D2D} communication. Similarly, we have $c_{b_i} = S_i/R_2 $ when the $i$-th content is served by the \ac{BS} with average rate $R_2 $. The average energy consumption when downloading files by the devices in the representative cluster is given by
\begin{align}
\label{energy_avrg}
E_{av} = \sum_{k=1}^{\infty} E(\textbf{b}|k)\mathbb{P}(n=k)
\end{align}
where $\mathbb{P}(n=k)$ is the probability that there are $k$ devices in the representative cluster, \textcolor{blue}{equal to $\frac{\overline{n}^k e^{-\overline{n}}}{k!}$}, and $E(\textbf{b}|k)$ is the energy consumption conditioning on having $k$ devices in the cluster, written similar to \cite{energy_efficiency} as
\begin{equation}
E(\textbf{b}|k) = \sum_{j=1}^{k} \sum_{i=1}^{N_f}\big[\mathbb{P}_{j,i}^d q_i P_d c_{d_i} + \mathbb{P}_{j,i}^b q_i P_b c_{b_i}\big]
\label{energy}
\end{equation}
where $\mathbb{P}_{j,i}^d$ and $\mathbb{P}_{j,i}^b$ represent the probability of obtaining the $i$-th content by the $j$-th device from the local cluster, i.e., via \ac{D2D} communication, and the BS, respectively. $P_b$ denotes the \ac{BS} transmission power. Given that there are $k$ devices in the cluster, it is obvious that $\mathbb{P}_{j,i}^b=(1-b_i)^{k}$, and $\mathbb{P}_{j,i}^d=(1 - b_i)\big(1-(1-b_i)^{k-1}\big)$. 

The average rates $R_1$ and $R_2$ are now computed to get a closed-form expression for $E(\textbf{b}|k)$. 
From equation (\ref{rate_eqn}), we need to obtain the \ac{D2D} coverage probability $\pcd$ and \ac{BS}-to-Device coverage probability $\pcb$ to calculate $R_1$ and $R_2$, respectively. Given the number of devices $k$ in the representative cluster, the Laplace transform of the inter-cluster interference \textcolor{blue}{is as obtained in (\ref{LT_inter})}. However, the intra-cluster interfering devices no longer represent a Gaussian \ac{PPP} since the number of devices is conditionally fixed, i.e., not a Poisson random number as before. To facilitate the analysis, for every realization $k$, we assume that the intra-cluster interfering devices form a Gaussian \ac{PPP} with intensity function given by $pkf_Y(y)$. Such an assumption is mandatory for analytical tractability. 
From Lemma 2, the intra-cluster Laplace transform conditioning on $k$ can be approximated as
\begin{align}
\mathscr{L}_{I_{\Phi_c} }(s|k) &\approx  {\rm exp}\Big(-pk \int_{h=0}^{\infty}\frac{s}{s+ h^{\alpha}}f_H(h)\dd{h}\Big) 
\nonumber
 \end{align}
and the conditional \ac{D2D} coverage probability is given by
\begin{align}
\label{p_b_d2d}
\pcd = 
\int_{r=0}^{\infty} \frac{r}{2 \sigma^2} {\rm e}^{\frac{-r^2}{4 \sigma^2}} 
 \mathscr{L}_{I_{\Phi_p^{!}}}\big(s=\frac{\theta r^{\alpha}}{P_d}\big) \mathscr{L}_{I_{\Phi_c}} \big(s=\frac{\theta r^{\alpha}}{P_d}\big|k\big)\dd{r}
 \end{align}
\textcolor{blue}{With the adopted slotted-ALOHA scheme, the access probability $p$ minimizing $E(\textbf{b}|k)$  is computed over the interval [0,1] to maximize the \ac{D2D} achievable rate $R_1$ in (\ref{rate_eqn1}), with the condition $p>\frac{R_0}{W_{1} {\rm log}_2 \big(1 + \theta \big)}$ holds to fulfill the probability $\mathbb{P}(R_{1}>R_0)$ greater than zero.
As an illustrative example, in Fig.~\ref{prob_r_geq_r0_vs_p}, we plot the \ac{D2D} average achievable rate $R_1$ against the channel access probability $p$. 
As evident from the plot, we see that there is certain access probability; before it the rate $R_1$ tends to increase since the channel access probability increases, and beyond it, the rate $R_1$ decreases monotonically due to the effect of more interferers accessing the channel. In such a case, although we observe that increasing $p$ above $\frac{R_0}{W_{1} {\rm log}_2 \big(1 + \theta \big)}=0.1$ improves the average achievable rate $R_1$, it comes at a price of a decreased $\mathbb{P}(R_{1}>R_0)$.} 
\begin{figure}[!h]
	\begin{center}
		\includegraphics[width=3.5in]{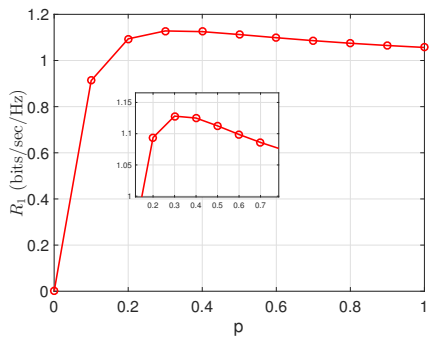}		
		\caption {\textcolor{blue}{The \ac{D2D} average achievable rate $R_1$ versus the access probability $p$ ($\lambda_p = 20 \text{ clusters}/\SI{}{km}^2$, $\overline{n}=12$, $\sigma=\SI{30}{m}$, $\theta=\SI{0}{dB}$, $R_0/W_1=0.1\SI{}{bits/sec/Hz}$).}}
		\label{prob_r_geq_r0_vs_p}
	\end{center}
\end{figure}

Analogously, under the \ac{PPP} $\Phi_{bs}$, and based on the nearest \ac{BS} association principle, it is shown in \cite{andrews2011tractable} that the \ac{BS} coverage probability can be expressed as
\begin{equation}
\pcb =\frac{1}{{}_2 F_1(1,-\delta;1-\delta;-\theta)},
\label{p_b_bs}
\end{equation}
where ${}_2 F_1(.)$ is the Gaussian hypergeometric function and $\delta = 2/\alpha$. Given the coverage probabilities $\pcd$ and $\pcb$ in (\ref{p_b_d2d}) and (\ref{p_b_bs}), respectively, $R_1$ and $R_2 $ can be calculated from (\ref{rate_eqn}), and hence $E(\textbf{b}|k)$ is expressed in a closed-form.           
\subsection{Energy Consumption Minimization}
The energy minimization problem can be formulated as
\begin{align}
\label{optimize_eqn1}
&\textbf{P3:}		\quad\underset{\textbf{b}}{\text{min}} \quad E(\textbf{b}|k) = \sum_{j=1}^{k} \sum_{i=1}^{N_f}\big[\mathbb{P}_{j,i}^d q_i P_d c_{d_i} + \mathbb{P}_{j,i}^b q_i P_b c_{b_i}\big]
\\
\label{const11}
&\textrm{s.t.}\quad  (\ref{const110}), (\ref{const111})   \nonumber 
\end{align}	
In the next lemma, we prove the convexity condition for $E(\textbf{b}|k)$.
\begin{lemma}
\label{convex_E}
The energy consumption $E(\textbf{b}|k)$ is convex if $\frac{P_b}{R_2}>\frac{P_d}{R_1}$.
\end{lemma}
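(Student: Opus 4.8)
The plan is to exploit the separable structure of $E(\textbf{b}|k)$ across the content indices and reduce the convexity question to a one-dimensional second-derivative test for each caching variable $b_i$. First I would observe that the substituted probabilities $\mathbb{P}_{j,i}^b = (1-b_i)^{k}$ and $\mathbb{P}_{j,i}^d = (1 - b_i)\big(1-(1-b_i)^{k-1}\big)$ do not actually depend on the device index $j$, so the inner sum over $j$ contributes merely a constant factor $k$. Combining this with $c_{d_i}=S_i/R_1$ and $c_{b_i}=S_i/R_2$, I can write $E(\textbf{b}|k) = k\sum_{i=1}^{N_f} f_i(b_i)$, where each $f_i$ is a function of the single variable $b_i$ only.

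Next I would simplify $f_i$. Expanding the $\mathbb{P}_{j,i}^d$ term and collecting the two contributions proportional to $(1-b_i)^{k}$, the mixed term telescopes and leaves
\begin{equation}
f_i(b_i) = q_i S_i \Big[ \tfrac{P_d}{R_1}(1-b_i) + \big(\tfrac{P_b}{R_2} - \tfrac{P_d}{R_1}\big)(1-b_i)^{k} \Big].
\end{equation}
The affine part $(1-b_i)$ has vanishing second derivative, so all curvature comes from the $(1-b_i)^{k}$ term. Differentiating twice gives
\begin{equation}
f_i''(b_i) = q_i S_i\, k(k-1)\Big(\tfrac{P_b}{R_2} - \tfrac{P_d}{R_1}\Big)(1-b_i)^{k-2}.
\end{equation}
Since $q_i S_i > 0$, $k(k-1)\ge 0$, and $(1-b_i)^{k-2}\ge 0$ for $b_i\in[0,1]$, this quantity is nonnegative exactly when $\frac{P_b}{R_2} \ge \frac{P_d}{R_1}$, which is precisely the claimed condition.

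Finally, because $E(\textbf{b}|k)$ is a sum of terms each depending on a single coordinate, its Hessian over $\textbf{b}$ is diagonal with entries $k f_i''(b_i)$. Under the stated condition every diagonal entry is nonnegative, so the Hessian is positive semidefinite on the feasible box defined by $(\ref{const111})$, which establishes convexity of $E(\textbf{b}|k)$. I expect the only mild subtlety to be the algebraic bookkeeping in the simplification step, namely recognizing that the $j$-sum is redundant and that the two $(1-b_i)^{k}$ terms partially cancel to expose the sign factor $\big(\tfrac{P_b}{R_2}-\tfrac{P_d}{R_1}\big)$; there is no genuine conceptual obstacle, since separability collapses the whole problem to a scalar curvature check.
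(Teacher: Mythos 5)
Your proposal is correct and follows essentially the same route as the paper: both reduce the problem to the diagonal Hessian of $E(\textbf{b}|k)$ and observe that each diagonal entry carries the sign of $\big(\frac{P_b}{R_2}-\frac{P_d}{R_1}\big)$ through a term proportional to $q_i S_i k(k-1)(1-b_i)^{k-2}$, so positive semidefiniteness holds under the stated condition. Your write-up merely makes explicit the algebra the paper leaves implicit (the redundant $j$-sum and the cancellation exposing the $(1-b_i)^{k}$ coefficient), which is a harmless difference in presentation, not in substance.
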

\begin{proof}
\textcolor{blue}{We proceed by deriving the Hessian matrix of $E(\textbf{b}|k)$. The Hessian matrix of $E(\textbf{b}|k)$ w.r.t. the caching variables is \textbf{H}$_{i,j} = \frac{\partial^2 E(\textbf{b}|k)}{\partial b_i \partial b_j}$, $\forall i,j \in \mathcal{F}$. \textbf{H}$_{i,j}$ a diagonal matrix whose $i$-th row and $i$-th column element is given by $k(k-1) S_i\Big(\frac{P_b}{R_2}-\frac{P_d}{R_1}\Big)q_i(1 - b_i)^{k-2}$.}
Since the obtained Hessian matrix is full-rank and diagonal, $\textbf{H}_{i,j}$ is positive semidefinite (and hence $E(\textbf{b}|k)$ is convex) if all the diagonal entries are nonnegative, i.e., when
$\frac{P_b}{R_2}>\frac{P_d}{R_1}$. In practice, it is reasonable to assume that  $P_b \gg P_d$, as in \cite{ericsson}, the \ac{BS} transmission power is 100 fold the \ac{D2D} power.
\end{proof}
As a result of Lemma 3, the optimal caching probability can be computed to minimize $E(\textbf{b}|k)$.
\begin{lemma}
The optimal caching probability $\textbf{b}^{*}$ for the energy minimization problem \textbf{P3} is given by,
\begin{equation}
b_i^* = \Bigg[ 1 - \Big(\frac{v^* + k^2q_iS_i\frac{P_d}{R_1 }}{kq_iS_i\big(\frac{P_d}{R_1 }-\frac{P_b}{R_2}\big)} \Big)^{\frac{1}{k-1}}\Bigg]^{+}
\label{energy11}
\end{equation} 
 where $v^{*}$ satisfies the maximum cache constraint $\sum_{i=1}^{N_f} \Big[ 1 - \Big(\frac{v^* + k^2q_iS_i\frac{P_d}{R_1 }}{kq_iS_i\big(\frac{P_d}{R_1 }-\frac{P_b}{R_2 }\big)} \Big)^{\frac{1}{k-1}}\Big]^{+}=M$, and $[x]^+ =$ max$(x,0)$.
\end{lemma}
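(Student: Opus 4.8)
The plan is to solve \textbf{P3} directly through its Karush--Kuhn--Tucker (KKT) system, which is legitimate because Lemma~\ref{convex_E} guarantees that $E(\textbf{b}|k)$ is convex whenever $\frac{P_b}{R_2}>\frac{P_d}{R_1}$ (the regime we are in), and the feasible set defined by the affine budget $\sum_i b_i = M$ together with the box constraints $0\le b_i\le 1$ is a nonempty convex polytope on which Slater's condition holds. Under convexity the KKT conditions are necessary and sufficient, so any $\textbf{b}$ satisfying them is the global minimizer.

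First I would reduce the objective to a separable form. Substituting $c_{d_i}=S_i/R_1$, $c_{b_i}=S_i/R_2$, $\mathbb{P}_{j,i}^{b}=(1-b_i)^{k}$ and $\mathbb{P}_{j,i}^{d}=(1-b_i)\big(1-(1-b_i)^{k-1}\big)$, and noting that the summand does not depend on the device index $j$ so the inner sum merely multiplies by $k$, the energy collapses into a sum of per-file terms, each a function of $b_i$ alone. This separability is the crucial structural feature: it decouples the stationarity equations across files, so the problem reduces to $N_f$ scalar conditions linked only through the single multiplier of the cache-budget constraint.

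Next I would form the Lagrangian $\mathcal{L}=E(\textbf{b}|k)+v\big(\sum_i b_i-M\big)-\sum_i \mu_i b_i+\sum_i \nu_i(b_i-1)$, with $v\in\mathbb{R}$ dual to the budget and $\mu_i,\nu_i\ge 0$ dual to the box constraints, and impose stationarity $\partial\mathcal{L}/\partial b_i=0$. Differentiating the separable objective produces a single term proportional to $(1-b_i)^{k-1}$; here the convexity hypothesis $\frac{P_b}{R_2}>\frac{P_d}{R_1}$ is exactly what keeps the coefficient $q_iS_i(\frac{P_d}{R_1}-\frac{P_b}{R_2})$ negative and the base of the power positive. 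Solving this equation for $b_i$ isolates $(1-b_i)^{k-1}$ as a ratio that is affine in $v$ over the denominator $k\,q_iS_i(\frac{P_d}{R_1}-\frac{P_b}{R_2})$, and taking the $(k-1)$-th root returns the interior stationary point $1-(\cdots)^{1/(k-1)}$ of (\ref{energy11}). To obtain the $[\,\cdot\,]^{+}$ clipping I would invoke complementary slackness: since each per-file term is convex and decreasing in $b_i$ on $[0,1]$, the upper box constraint never binds ($\nu_i=0$), whereas the lower one binds ($\mu_i>0$, $b_i=0$) exactly for the files whose unconstrained stationary value is negative, so projecting the interior solution onto $[0,\infty)$ yields the stated $[\,\cdot\,]^{+}$ operator.

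Finally, the multiplier is fixed by primal feasibility: substituting $b_i^{*}(v)$ into $\sum_i b_i=M$ gives the scalar equation for $v^{*}$ quoted in the lemma, which I would solve numerically. The main obstacle I anticipate is the careful justification of the box-constraint bookkeeping, namely proving cleanly that $\nu_i=0$ at the optimum so that only the one-sided $[\,\cdot\,]^{+}$ clipping (not a two-sided projection onto $[0,1]$) appears; this hinges on verifying the sign of $\partial E/\partial b_i$ on the whole interval via the monotonicity granted by convexity. A secondary, milder point is to show that $\sum_i b_i^{*}(v)$ is monotone in $v$, which guarantees that the water-filling level $v^{*}$ exists and is unique, and hence that the budget equation has a well-defined root.
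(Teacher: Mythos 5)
Your overall route is the right one and is in fact the paper's route: the authors simply state that the proof ``proceeds in a similar manner to Lemma 3,'' i.e.\ the KKT derivation of Appendix C --- separable objective, Lagrangian with one multiplier for the cache budget and two sets of multipliers for the box constraints, stationarity plus complementary slackness, and a water-filling level fixed by $\sum_i b_i^*=M$. Reducing $E(\textbf{b}|k)$ to $k\sum_i q_iS_i\big[((1-b_i)-(1-b_i)^k)\tfrac{P_d}{R_1}+(1-b_i)^k\tfrac{P_b}{R_2}\big]$ and solving the per-file stationarity equation for $(1-b_i)^{k-1}$ is exactly what is intended.

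There is, however, one step in your argument that is wrong as stated: the claim that ``since each per-file term is convex and decreasing in $b_i$ on $[0,1]$, the upper box constraint never binds.'' The implication runs the other way. The per-file derivative is $\frac{\partial E}{\partial b_i}=-kq_iS_i\frac{P_d}{R_1}+k^2q_iS_i\big(\frac{P_d}{R_1}-\frac{P_b}{R_2}\big)(1-b_i)^{k-1}$, which under the convexity hypothesis is strictly negative on all of $[0,1]$ (at $b_i=1$ it equals $-kq_iS_i\frac{P_d}{R_1}<0$). A decreasing objective wants $b_i$ as large as possible, so it is precisely the upper bound that tends to activate: whenever the budget multiplier satisfies $v^*$ small enough relative to $kq_iS_i\frac{P_d}{R_1}$ (which happens for the most popular or largest files, or when $M$ is generous), the stationarity equation has no root in $[0,1]$ and KKT forces $b_i^*=1$ with $\nu_i>0$. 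This is exactly the first case in the paper's Lemma 3, and a complete proof here must include the analogous $b_i^*=1$ branch (equivalently, a two-sided projection onto $[0,1]$, not only $[\,\cdot\,]^{+}$). Your justification for the one-sided clipping therefore does not go through; the correct statement is that the lower bound binds when $v^*$ is large and the upper bound binds when $v^*$ is small, with the interior formula in between. A secondary bookkeeping point: carrying the factor $k$ from the $j$-sum through the derivative produces coefficients $kq_iS_i\frac{P_d}{R_1}$ and $k^2q_iS_i(\frac{P_d}{R_1}-\frac{P_b}{R_2})$ in the stationarity equation, so you should verify explicitly that your ratio matches the placement of $k$ and $k^2$ in the lemma's displayed expression rather than asserting it.
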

\begin{proof}
The proof proceeds in a similar manner to Lemma 3 and is omitted.
\end{proof}
\begin{proposition} {\rm \textcolor{blue}{By observing (\ref{energy11}), we can demonstrate the effects of content size and popularity on the optimal caching probability. $S_i$ exists in the numerator and denominator
of the second term in (\ref{energy11}), however, the effect on numerator is more significant due to larger multiplier. The same property is observed for $q_i$. With the increase of $S_i$ or $q_i$, the magnitude of the second term in (\ref{energy11}) increases, and correspondingly, $b_i^*$ decreases. That is a content with larger size or lower popularity has smaller probability to be cached.}}   
\end{proposition}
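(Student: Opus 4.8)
The plan is to work in the interior regime of (\ref{energy11}), where $0<b_i^*<1$ so that the outer projection $[\cdot]^+$ is inactive, and to treat $b_i^*$ as a smooth function of the content parameters for fixed $v^*$ (the multiplier is a single scalar pinned by the budget $\sum_i b_i^*=M$ and is therefore common to all files). The structural observation that drives everything is that $q_i$ and $S_i$ enter (\ref{energy11}) \emph{only} through their product; writing $z=q_iS_i$, $\rho=P_d/R_1$ and $\tau=P_b/R_2$, the bracketed fraction collapses to
\begin{equation}
F_i=\frac{v^*+k^2\rho\,z}{k(\rho-\tau)\,z}=\frac{v^*}{k(\rho-\tau)\,z}+\frac{k\rho}{\rho-\tau}.
\end{equation}
Consequently the effects of popularity and of size on $b_i^*$ are identical in direction, and it suffices to analyze the scalar $z$. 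Since $b_i^*=1-F_i^{1/(k-1)}$ and $x\mapsto x^{1/(k-1)}$ is increasing for $x>0$ and $k\ge 2$, the whole claim reduces to determining the sign of $\mathrm{d}F_i/\mathrm{d}z$.

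First I would import the sign data already available. The convexity condition of Lemma~\ref{convex_E} gives $\tau>\rho$, so $\rho-\tau<0$ and the denominator of $F_i$ is negative; feasibility of an interior optimum forces $F_i\in(0,1)$, hence $F_i>0$, and with a negative denominator this pins the numerator $v^*+k^2\rho z$ to be negative, so that $v^*<0$. Differentiating the simplified form gives $\mathrm{d}F_i/\mathrm{d}z=-v^*/[k(\rho-\tau)z^2]$, whose sign is fixed entirely by the signs of $v^*$ and $\rho-\tau$ just established; substituting them tells us whether the second term $F_i^{1/(k-1)}$ grows or shrinks with $z$, and hence the direction in which $b_i^*$ moves. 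To finish I would confirm that $v^*$ is shared across files (so the monotonicity is a legitimate cross-content comparison) and treat the clamped regime $b_i^*=0$ separately, noting that the projection only flattens the profile and cannot reverse a monotone trend.

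The hard part will be pinning the sign of $v^*$ rigorously and confirming that it delivers the asserted direction, rather than leaning on the informal ``dominant-multiplier'' reading. That heuristic is in fact insufficient: the leading $z$-proportional parts of numerator and denominator cancel in $F_i$, leaving the constant $k\rho/(\rho-\tau)$, so \emph{all} of the dependence of $F_i$ on $z$ is carried by the $v^*/z$ term. The monotonicity is therefore decided solely by the sign of $v^*$ together with $\rho-\tau<0$, and one must argue these signs from feasibility and from Lemma~\ref{convex_E} rather than from the relative size of the $k^2$ and $k$ coefficients. Making this precise is the crux; I would also dispatch the degenerate case $k=1$ (where (\ref{energy11}) is ill-defined) by returning directly to the stationarity condition of \textbf{P3}, in which $E(\textbf{b}|1)$ is linear in each $b_i$ and the optimal allocation is bang-bang in $q_iS_i$.
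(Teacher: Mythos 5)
Your route is genuinely different from the paper's. The paper offers no proof beyond the informal multiplier comparison embedded in the statement itself ($k^2$ in the numerator versus $k$ in the denominator of (\ref{energy11})), whereas you reduce the whole dependence to the single scalar $z=q_iS_i$ and observe that the $z$-proportional parts of numerator and denominator cancel exactly, leaving (in your notation) $F_i=\frac{v^*}{k(\rho-\tau)z}+\frac{k\rho}{\rho-\tau}$. That is the right move, and it shows the paper's stated mechanism is not what controls the monotonicity at all. Your sign bookkeeping is also correct: Lemma~\ref{convex_E} gives $\rho-\tau<0$; an interior optimum forces $F_i\in(0,1)$, which with a negative denominator pins $v^*<-k^2\rho z<0$; and those two signs are all that is needed.

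What you must still do is carry the substitution through, because the answer is not the one the proposition asserts. With $v^*<0$ and $\rho-\tau<0$ you get $\mathrm{d}F_i/\mathrm{d}z=-v^*/[k(\rho-\tau)z^2]<0$, so $F_i^{1/(k-1)}$ \emph{decreases} in $z$ and $b_i^*=1-F_i^{1/(k-1)}$ \emph{increases} in $q_iS_i$ (the $[\cdot]^+$ clamp preserves this weakly, as you note). This confirms only the popularity half of the final clause (lower $q_i$ implies lower $b_i^*$), but it contradicts the intermediate sentence (``with the increase of $S_i$ or $q_i$ \dots $b_i^*$ decreases'') and the size conclusion: under energy minimization a larger file gets a \emph{larger} caching probability, consistent with the marginal-saving intuition that fetching a large file from the BS costs more energy. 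In fact your product observation alone already shows the proposition cannot stand as written: since $b_i^*$ depends on $(q_i,S_i)$ only through $q_iS_i$, ``larger size'' and ``lower popularity'' move $z$ in opposite directions and cannot both decrease $b_i^*$. So your approach is sound and strictly more rigorous than the paper's, but once completed it does not reproduce the paper's claim --- it refutes the size statement and exposes the dominant-multiplier argument as invalid; you should state that conclusion explicitly rather than leave the final substitution implicit.
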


By substituting $b_i^*$ into (\ref{energy_avrg}), the average energy consumption per cluster is obtained. In the rest of the paper, we study and minimize the weighted average delay per request for the proposed system.
\section{Delay Analysis}
In this section, the delay analysis and minimization are discussed. A joint stochastic geometry and queueing theory model is exploited to study this problem. The delay analysis incorporates the study of a system of spatially interacting queues. To simplify the mathematical analysis, we further consider that only one \ac{D2D} link can be active within a cluster of $k$ devices, where $k$ is fixed. As shown later, such an assumption facilitates the analysis by deriving simple expressions. We begin by deriving the \ac{D2D} coverage probability under the above assumption, which is used later in this section.
\begin{lemma}
\label{coverage_lemma}
The \ac{D2D} coverage probability of the proposed clustered model with one active \ac{D2D} link within a cluster is given by
 \begin{align} 
 \label{v_0} 
 \pcd = \frac{1}{4\sigma^2 Z(\theta,\alpha,\sigma)} , 
 \end{align}
where $Z(\theta,\alpha,\sigma) = (\pi\lambda_p \theta^{2/\alpha}\Gamma(1 + 2/\alpha)\Gamma(1 - 2/\alpha)+ \frac{1}{4\sigma^2})$.
\end{lemma}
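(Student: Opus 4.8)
The plan is to exploit the one-active-link assumption to strip the coverage probability down to a single averaged inter-cluster interference Laplace transform, and then to notice that two independent Gaussian-in-$r$ factors collapse into one elementary integral. The first observation is that with exactly one active D2D transmitter per cluster the intra-cluster interference $I_{\Phi_c}$ vanishes identically, so $\mathscr{L}_{I_{\Phi_c}}(s)\equiv 1$. Consequently the construction leading to (\ref{p_b_d2d}) reduces to $\pcd = \int_{0}^{\infty} f_R(r)\,\mathscr{L}_{I_{\Phi_p^{!}}}\!\big(\theta r^{\alpha}/P_d\big)\,\dd{r}$, with $f_R$ the Rayleigh serving-distance density of (\ref{rayleigh}).

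The central step is evaluating $\mathscr{L}_{I_{\Phi_p^{!}}}$ under this assumption. The interferers now consist of precisely one device per remote cluster, each displaced from its parent point by an independent Gaussian offset with density $f_Y$. By the displacement theorem, independently displacing every point of the parent PPP $\Phi_p^{!}$ by a Gaussian vector yields again a homogeneous PPP of the \emph{same} intensity $\lambda_p$ (the Gaussian kernel is translation invariant and integrates to one), so the aggregate inter-cluster interference at the origin is statistically identical to that of a single homogeneous PPP of density $\lambda_p$. Equivalently, one may start from Lemma 1, replace the Poisson per-cluster factor $1-e^{-p\overline{n}\varphi(s,v)}$ by $\varphi(s,v)$ since each cluster now contributes exactly one interferer, and then integrate out $v$ using the identity $\int_{0}^{\infty}\mathrm{Rice}(u\,|\,v,\sigma)\,v\,\dd{v}=u$. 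Either route gives the standard PPP form.

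Carrying out the resulting angular integral, with $sP_d=\theta r^{\alpha}$, the substitution $t=u^{\alpha}/(\theta r^{\alpha})$ together with the Beta-function identity $\int_{0}^{\infty}\frac{t^{2/\alpha-1}}{1+t}\,\dd{t}=\Gamma(2/\alpha)\Gamma(1-2/\alpha)$ and the relation $\tfrac{2}{\alpha}\Gamma(2/\alpha)=\Gamma(1+2/\alpha)$ yield $\mathscr{L}_{I_{\Phi_p^{!}}}\!\big(\theta r^{\alpha}/P_d\big)=\exp\!\big(-\pi\lambda_p\,\theta^{2/\alpha}\,\Gamma(1+2/\alpha)\Gamma(1-2/\alpha)\,r^{2}\big)$, i.e. an exponential in $r^{2}$. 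The crucial point is that the Rayleigh density of (\ref{rayleigh}) is \emph{also} an exponential in $r^{2}$, so the average reduces to $\frac{1}{2\sigma^{2}}\int_{0}^{\infty} r\,e^{-(C+1/4\sigma^{2})r^{2}}\,\dd{r}$ with $C=\pi\lambda_p\theta^{2/\alpha}\Gamma(1+2/\alpha)\Gamma(1-2/\alpha)$. Invoking $\int_{0}^{\infty} r\,e^{-Br^{2}}\,\dd{r}=1/(2B)$ immediately gives $\pcd=\frac{1}{4\sigma^{2}(C+1/4\sigma^{2})}=\frac{1}{4\sigma^{2}Z(\theta,\alpha,\sigma)}$, as claimed.

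I expect the only genuine obstacle to be rigorously justifying the interference reduction — either invoking the displacement theorem cleanly (with Slivnyak's theorem to handle the removal of the representative cluster center $x_0$) or equivalently pushing the Rice-density integration through — since everything afterward is a routine Gamma-function evaluation followed by a Gaussian moment integral. The elegance and compactness of the final expression rest entirely on the matching $e^{-\,\mathrm{const}\cdot r^{2}}$ structure of the interference transform and the serving-distance density.
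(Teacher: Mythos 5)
Your proof is correct and follows exactly the route the paper indicates: the paper's own ``proof'' merely states that the result follows from the displacement theorem of the PPP together with the machinery of Lemmas 1 and 2, deferring details to its conference version. Your argument supplies precisely those details --- the vanishing of the intra-cluster interference under the one-active-link assumption, the reduction of the one-interferer-per-remote-cluster process to a homogeneous PPP of intensity $\lambda_p$ (equivalently, the Rice-density identity collapsing $\varphi(s,v)$), the Gamma-function evaluation of the resulting integral, and the final Gaussian moment integral against $f_R(r)$ --- and each step checks out.
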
			
\begin{proof}
The result can be proved by using the displacement theory of the \ac{PPP} \cite{daley2007introduction}, and then proceeding in a similar manner to Lemma 1 and 2. The proof is presented in Appendix D for completeness.
\end{proof}
In the following, we firstly describe the traffic model of the network, and then we formulate the delay minimization problem. 
\begin{figure}
	\begin{center}
		\includegraphics[width=3.5in]{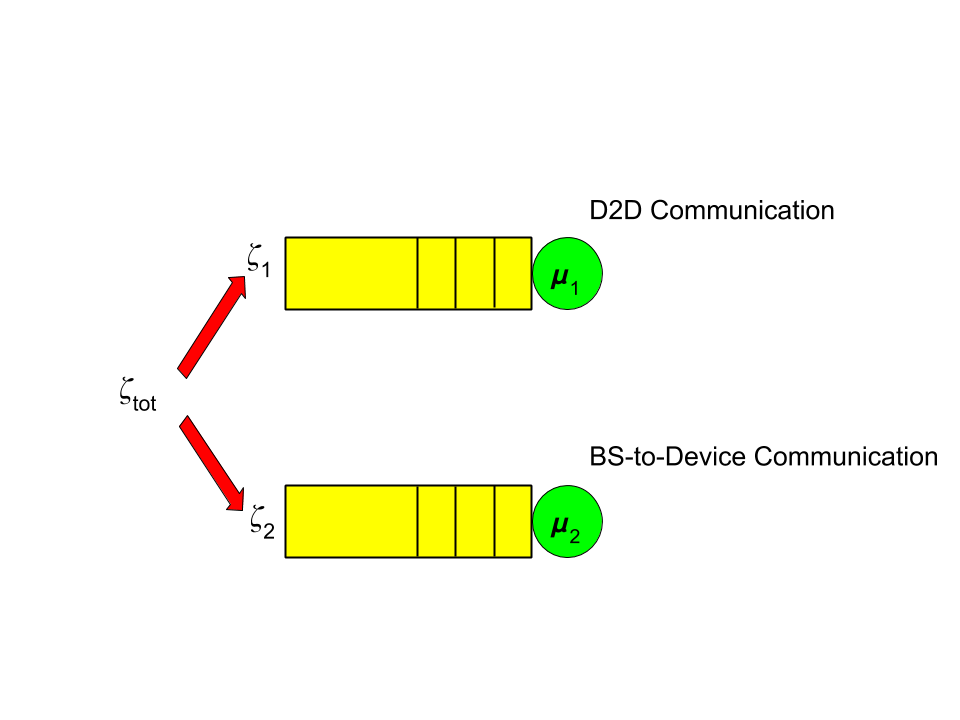}		
		\caption {\textcolor{black}{The traffic model of request arrivals and departures in a given cluster. $Q_1$ and $Q_2$ are M/G/1 queues modeling requests served by \ac{D2D} and \ac{BS}-to-Device communication, respectively.}}
		\label{delay_queue}
	\end{center}
\end{figure}

\subsection{Traffic Model}
We assume that the aggregate request arrival process from the devices in each cluster follows a Poisson arrival process with parameter $\zeta_{tot}$ (requests per time slot). As shown in Fig.~\ref{delay_queue}, the incoming requests are further divided according to where they are served from.  $\zeta_{1}$ represents the arrival rate of requests served via the \ac{D2D} communication, whereas $\zeta_{2}$ is the arrival rate for those served from the BSs. $\zeta_{3} = 1 - \zeta_{1} - \zeta_{2}$  denotes the arrival rate of requests served via the self-cache with zero delay. By definition, $\zeta_{1}$ and $\zeta_{2}$ are also Poisson arrival processes. Without loss of generality, we assume that the file size has a general distribution $G$ whose mean is denoted as $\overline{S}$ \SI{}{MBytes}. Hence, an M/G/1 queuing model is adopted whereby two non-interacting queues, $Q_1$ and $Q_2$, model the traffic in each cluster served via the \ac{D2D} and \ac{BS}-to-Device communication, respectively. Although $Q_1$ and $Q_2$ are non-interacting as the \ac{D2D} communication is assumed to be out-of-band, these two queues are  spatially interacting with similar queues in other clusters. To recap, $Q_1$ and $Q_2$ are two M/G/1 queues with arrival rates $\zeta_{1}$ and $\zeta_{1}$, and service rates $\mu_1$ and $\mu_2$, respectively. 

\subsection{Queue Dynamics}
It is worth highlighting that the two queues $Q_i$, $i \in \{1,2\}$, accumulate requests for files demanded by the clusters members, not the files themselves. First-in first-out (FIFO) scheduling is assumed where a request for content arrives first will be scheduled first either by the \ac{D2D} or \ac{BS} communication if the content is cached among the devices or not, respectively. The result of FIFO scheduling only relies on the time when the request arrives at the queue and is irrelevant to the particular device that issues the request. Given the parameter of the Poisson's arrival process $\zeta_{tot}$, the arrival rates at the two queues are expressed respectively as
\begin{align}
\zeta_{1} &= \zeta_{tot} \sum_{i=1}^{N_f}q_i\big((1 - b_i) -(1-b_i)^{k}\big),	\\
\zeta_{2} &= \zeta_{tot} \sum_{i=1}^{N_f}q_i (1-b_i)^{k} 
\end{align}

The network operation is depicted in Fig. \ref{delay_queue}, and described in detail below.
\begin{enumerate}

\item Given the memoryless property of the arrival process (Poisson arrival) along with the assumption that the service process is independent of the arrival process,
the number of requests in any queue at a future time only depends upon the current number in the system (at time $t$) and the arrivals or departures that occur within the interval $e$.
\begin{align}
Q_{i}(t+e) =  Q_{i}(t) +   \Lambda_{i}(e)  -   M_{i}(e)  
\end{align}
where $\Lambda_{i}(e)$ is the number of arrivals in the time interval $(t,t+e)$, whose mean is $\zeta_i$ \SI{}{sec}$^{-1}$, and $M_{i}(e)$ is the number of departures in the time interval $(t,t+e)$, whose mean is $\mu_i = \frac{\mathbb{E}(\textbf{1} \{\sir>\theta\})W_i{\rm log}_2(1+\theta)}{\overline{S}}$ \SI{}{sec}$^{-1}$. It is worth highlighting that, unlike the spatial-only model studied in the previous sections, the term $\mathbb{E}(\textbf{1} \{\sir>\theta\})$ is dependent on the traffic dynamics \textcolor{black}{since a request being served in a given cluster is interfered only from other clusters that also have requests to serve}. What is more noteworthy is that the mean service time $\tau_i = \frac{1}{\mu_i}$ follows the same distribution as the file size. These aspects will be revisited  later in this section.

\item $\Lambda_{i}(e)$ is dependent only on $e$ because the arrival process is Poisson. $M_{i}(e)$ is $0$  if the service time of the file being served $\epsilon_i >e$. $M_{i}(e)$ is 1 if $\epsilon_1 <e$ and $\epsilon_2 + \epsilon_1>e$, and so on. As the service times $\epsilon_1, \epsilon_2, \dots, \epsilon_n$ are independent, neither $\Lambda_{i}(e)$ nor $M_{i}(e)$ depends on what happened prior to $t$. Thus, $Q_{i}(t+e)$ only depends upon $Q_{i}(t)$ and not the past history. Hence it is a is a \ac{CTMC} which obeys the stability conditions in \cite{szpankowski1994stability}.
\end{enumerate}
The following proposition provides the sufficient conditions for the stability of the buffers in the sense defined in \cite{szpankowski1994stability}, i.e., $\{Q_{i}\}$ has a limiting distribution for $t \rightarrow \infty$.

\begin{proposition} {\rm The \ac{D2D} and \ac{BS}-to-Device traffic modeling queues are stable, respectively, if and only if}
\begin{align}
\label{stable1}
\zeta_{1} < \mu_1 &= \frac{\pcd W_1{\rm log}_2(1+\theta)}{\overline{S}} \\
\label{stable2}
\zeta_{2} < \mu_2 & =\frac{\pcb W_2{\rm log}_2(1+\theta)}{\overline{S}} 
\end{align}
\end{proposition}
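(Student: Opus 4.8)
The plan is to treat each buffer $Q_i$, $i\in\{1,2\}$, as a single work-conserving server fed by a Poisson stream, and to invoke the classical drift (Foster--Lyapunov / Loynes) criterion for the CTMC identified in the Queue Dynamics subsection. First I would read off the one-step conditional drift implied by the recursion $Q_i(t+e)=Q_i(t)+\Lambda_i(e)-M_i(e)$: for any state with $Q_i(t)>0$ the server is busy, so to first order in $e$ one has $\mathbb{E}[\Lambda_i(e)]=\zeta_i e$ and $\mathbb{E}[M_i(e)]=\mu_i e$, giving a mean drift $(\zeta_i-\mu_i)e$. Negative drift away from the origin, which by the ergodicity theorem for such chains in \cite{szpankowski1994stability} is equivalent to positive recurrence (existence of a limiting distribution as $t\to\infty$), holds precisely when $\zeta_i<\mu_i$, while the boundary case $\zeta_i=\mu_i$ is non-positive-recurrent. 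This already fixes the algebraic shape of the claim.

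Next I would substitute the service rate. Because the mean service time equals $\overline{S}$ divided by the instantaneous transmission rate $W_i\log_2(1+\theta)$ weighted by the event of successful reception, the server empties at rate $\mu_i=\mathbb{E}(\mathbf{1}\{\sir>\theta\})W_i\log_2(1+\theta)/\overline{S}$. It then remains only to identify the coverage factor $\mathbb{E}(\mathbf{1}\{\sir>\theta\})$ with $\pcd$ for the D2D queue, via Lemma \ref{coverage_lemma}, and with $\pcb$ for the BS-to-Device queue, via \eqref{p_b_bs}. This turns $\zeta_i<\mu_i$ into the two displayed inequalities \eqref{stable1}--\eqref{stable2}.

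The hard part will be that $\mu_i$ is not a genuine constant: as the text stresses, a request served in a tagged cluster only sees interference from other clusters whose queues are simultaneously backlogged, so $\mathbb{E}(\mathbf{1}\{\sir>\theta\})$ is coupled to the occupancy of all the other $Q_i$'s, making this a system of spatially interacting queues rather than an isolated M/G/1 line. To close this I would invoke a dominant-system coupling in the spirit of the interacting-queue stability arguments underlying \cite{szpankowski1994stability}: construct a dominant network in which every cluster transmits dummy requests whenever its buffer is empty, so that all clusters are permanently active and the aggregate interference is maximal. In the saturated network the coverage probability is exactly the fixed value $\pcd$ (resp.\ $\pcb$) computed in Lemma \ref{coverage_lemma} and \eqref{p_b_bs}, so the per-queue service rate is the constant $\mu_i$ above. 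Under a common realization of arrivals and service outcomes the dominant queues pathwise dominate the original ones, whence stability of the dominant system ($\zeta_i<\mu_i$) is sufficient for stability of the original; conversely, since the two systems are indistinguishable until a buffer first empties, a saturated (hence diverging) dominant queue forces the original queue to diverge as well, yielding necessity. This matching of the two stability regions is precisely what upgrades the drift condition to the stated \emph{if and only if}.
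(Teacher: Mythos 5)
Your proposal follows essentially the same route as the paper: construct a dominant (saturated) network in which empty buffers transmit dummy packets, so that the coverage factor $\mathbb{E}(\mathbf{1}\{\sir>\theta\})$ becomes the fixed constants $\pcd$ and $\pcb$, then use pathwise domination for sufficiency and Loynes'-type divergence for necessity. If anything, your necessity step is slightly tighter than the paper's, since you make explicit the indistinguishability of the two systems while the dominant queue never empties, whereas the paper only asserts divergence of the dominant network itself.
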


\begin{proof}
We  show sufficiency by proving that (\ref{stable1}) and (\ref{stable2}) guarantee stability in a dominant network, where all queues that have empty buffers make dummy transmissions. The dominant network is a fictitious system that is identical to the original system, except that terminals may choose to transmit even when their respective buffers are empty, in which case they simply transmit a dummy packet. If both systems are started from the same initial state and fed with the same arrivals, then the queues in the fictitious dominant system can never be shorter than the queues in the original system. 
\textcolor{blue}{Similar to the spatial-only network, in the dominant system, the typical receiver is seeing an interference from all other clusters whether they have requests to serve or not (dummy transmission).} This dominant system approach yields $\mathbb{E}(\textbf{1} \{\sir>\theta\})$ equal to $\pcd$ and $\pcb$ for the \ac{D2D} and \ac{BS}-to-Device communication, respectively. Also, the obtained delay is an upper bound for the actual delay of the system. The necessity of (\ref{stable1}) and (\ref{stable2}) is shown as follows: If $\zeta_i>\mu_i$, then, by Loynes' theorem \cite{loynes1962stability}, it follows that lim$_{t\rightarrow \infty}Q_i(t)=\infty$ (a.s.) for all queues in the dominant network. 
\end{proof}

Next, we conduct the analysis for the dominant system whose parameters are as follows. The content size has an exponential distribution of mean $\overline{S}$ \SI{}{MBytes}. The service times also obey an exponential distribution with means $\tau_1 = \frac{\overline{S}}{R_1}$ \SI{}{seconds} and $\tau_2 = \frac{\overline{S}}{R_2}$ \SI{}{seconds}. The rates $R_1$ and $R_2$ are calculated from (\ref{rate_eqn}) where $\pcd$ and $\pcb$ are from (\ref{v_0}) and (\ref{p_b_bs}), respectively. Accordingly, $Q_1$ and $Q_2$ are two continuous time independent (non-interacting) M/M/1 queues with service rates $\mu_1 = \frac{\pcd W_1{\rm log}_2(1+\theta)}{\overline{S}}$ and $\mu_2 =  \frac{\pcb W_2{\rm log}_2(1+\theta)}{\overline{S}}$ \SI{}{sec}$^{-1}$, respectively. \begin{proposition} {\rm The mean queue length $L_i$ of the $i$-th queue is given by} 
\begin{align}
\label{queue_len}
L_i &=  \rho_i + \frac{2\rho_i^2}{2\zeta_i(1 - \rho_i)},      
\end{align}
\end{proposition}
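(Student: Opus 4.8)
The plan is to treat each $Q_i$ as a standalone queue and invoke the Pollaczek--Khinchine (P--K) mean-value formula. The necessary groundwork is already in place: the preceding proposition establishes stability in the \emph{dominant} system, in which every cluster transmits in each slot (sending dummy packets when idle), so that $\mathbb{E}(\textbf{1}\{\sir>\theta\})$ no longer depends on the occupancies of the other clusters' queues and collapses to the constant coverage probability $\pcd$ (respectively $\pcb$). Consequently $Q_i$ decouples into an ordinary single-server queue fed by Poisson arrivals of rate $\zeta_i$ with independent, exponentially distributed service times of mean $\tau_i = 1/\mu_i$, where $\mu_i = \pc\, W_i \log_2(1+\theta)/\overline{S}$ and the utilization $\rho_i = \zeta_i/\mu_i = \zeta_i \tau_i$ satisfies $\rho_i < 1$ by the stability conditions.

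First I would record the stationary law of this decoupled queue. Since the file size (hence the service time) is exponential and the transmission rate is fixed, $Q_i$ is a continuous-time birth--death chain; solving its balance equations gives the geometric stationary distribution $\mathbb{P}(Q_i = n) = (1-\rho_i)\rho_i^{\,n}$ for $n \ge 0$. Then I would obtain $L_i$ in the P--K decomposed form $L_i = \rho_i + L_{q,i}$, separating the expected number in service, which equals the utilization $\rho_i$, from the expected number waiting $L_{q,i}$. The waiting part is $L_{q,i} = \zeta_i^2\,\mathbb{E}[S^2]/\bigl(2(1-\rho_i)\bigr)$; substituting the second moment of an exponential service time, $\mathbb{E}[S^2] = 2\tau_i^2 = 2/\mu_i^2$, turns the numerator into $2\rho_i^2$ and yields $L_i = \rho_i + \tfrac{2\rho_i^2}{2(1-\rho_i)}$, which is the claimed expression and also collapses to the familiar $\rho_i/(1-\rho_i)$. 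Equivalently, one can read $L_i$ straight off the geometric law via $\sum_{n\ge 0} n(1-\rho_i)\rho_i^{\,n}$.

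The algebra is routine; the step I would be most careful about is the decoupling justification, since $\mu_i$ is defined through $\mathbb{E}(\textbf{1}\{\sir>\theta\})$, a quantity that in the true network fluctuates with how many neighbouring clusters are simultaneously active. Treating $\mu_i$ as a fixed constant --- and hence $Q_i$ as a genuine M/M/1 queue rather than one with spatially coupled, state-dependent service --- is precisely what the dominant-system construction buys us, since every interferer is then always active and $\mathbb{E}(\textbf{1}\{\sir>\theta\}) = \pc$ is constant. I would therefore state explicitly that $L_i$ is the mean queue length of the dominant system (an upper bound on the true one), so that the standard M/M/1 / P--K argument applies without circularity. I would also double-check the precise placement of the constants in the stated denominator against this derivation, since the natural specialization to exponential service produces $2(1-\rho_i)$ rather than $2\zeta_i(1-\rho_i)$.
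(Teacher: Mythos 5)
Your proposal is correct and follows essentially the same route as the paper: the paper's proof likewise observes that, in the dominant system, $Q_i$ is a continuous-time M/M/1 queue with arrival rate $\zeta_i$, service rate $\mu_i$, and traffic intensity $\rho_i=\zeta_i/\mu_i$, and then reads off $L_i$ from the Pollaczek--Khinchine formula. Your closing caveat about the denominator is well taken: the correct specialization gives $L_i=\rho_i+\frac{2\rho_i^2}{2(1-\rho_i)}=\frac{\rho_i}{1-\rho_i}$, which is what the paper actually uses downstream (since $D_i=L_i/\zeta_i=\frac{1}{\mu_i-\zeta_i}$), so the extra $\zeta_i$ in the stated expression is a typographical slip rather than a flaw in your argument.
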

\begin{proof}
We can easily calculate $L_i$ by observing that $Q_i$ are continuous time M/M/1 queues with arrival rates $\zeta_i$, service rates $\mu_i$, and traffic intensities $\rho_i = \frac{\zeta_i}{\mu_i}$. Then, by applying the Pollaczek-Khinchine formula \cite{Kleinrock}, $L_i$ is directly obtained. 
\end{proof}
The average delay per request for each queue is calculated from
\begin{align}
D_1 &= \frac{L_1}{\zeta_1}= \frac{1}{\mu_1 - \zeta_1} = \frac{1}{W_1\mathcal{O}_{1} - \zeta_{tot} \sum_{i=1}^{N_f}q_i\big((1 - b_i) -(1-b_i)^{k}\big)} \\
D_2 &=  \frac{L_2}{\zeta_2}=\frac{1}{\mu_2 - \zeta_2} = \frac{1}{W_2\mathcal{O}_{2} - \zeta_{tot} \sum_{i=1}^{N_f}q_i (1-b_i)^{k}} 
\end{align}
where $\mathcal{O}_{1} = \frac{\pcd {\rm log}_2(1+\theta)}{\overline{S}}$, $\mathcal{O}_{2}= \frac{\pcb {\rm log}_2(1+\theta)}{\overline{S}}$ for notational simplicity. The weighted average delay $D$ is then expressed as
\begin{align}
D&=  \frac{\zeta_{1}D_1 + \zeta_{2}D_2}{\zeta_{tot}}  \nonumber \\
&= \frac{\sum_{i=1}^{N_f}q_i\big((1 - b_i) -(1-b_i)^{k}\big)}{ \mathcal{O}_{1}W_1 - \zeta_{tot} \sum_{i=1}^{N_f}q_i\big((1 - b_i) -(1-b_i)^{k}\big)} + \frac{\sum_{i=1}^{N_f}q_i (1-b_i)^{k}}{ \mathcal{O}_{2}W_2 - \zeta_{tot} \sum_{i=1}^{N_f}q_i (1-b_i)^{k}}
\end{align}
One important insight from the delay equation is that the caching probability $\textbf{b}$ controls the arrival rates $\zeta_{1}$ and $\zeta_{2}$ while the bandwidth determines the service rates $\mu_1$ and $\mu_2$. Therefore, it turns out to be of paramount importance to jointly optimize $\textbf{b}$ and $W_1$ to minimize the average delay. One relevant work is carried out in \cite{tamoor2016caching} where the authors investigate the storage-bandwidth tradeoffs for small cell \ac{BS}s that are subject to storage constraints. Subsequently, we formulate the weighted average delay joint caching and bandwidth minimization problem as
\begin{align}
\label{optimize_eqn3}
\textbf{P4:} \quad \quad&\underset{\textbf{b},{\rm W}_1}{\text{min}} \quad D(\textbf{b},W_1) \\
&\textrm{s.t.}\quad  (\ref{const110}), (\ref{const111})   \nonumber \\
&  0 \leq W_1 \leq W,  \\
\label{stab1}
&\zeta_{tot} \sum_{i=1}^{N_f}q_i\big((1 - b_i) -(1-b_i)^{k}\big) < \mu_1,	\\
\label{stab2}
&\zeta_{tot} \sum_{i=1}^{N_f}q_i (1-b_i)^{k} < \mu_2, 
\end{align}
where constraints (\ref{stab1}) and (\ref{stab2}) are the stability conditions for the queues $Q_1$ and $Q_2$, respectively. Although the objective function of \textbf{P4} is convex w.r.t. $W_1$, as derived below, the coupling of the optimization variables $\textbf{b}$ and $W_1$ makes \textbf{P4} a non-convex optimization problem. Therefore, \textbf{P4} cannot be solved directly using standard convex optimization techniques. 
\textcolor{black}{By applying the \ac{BCD} optimization technique, \textbf{P4} can be solved in an iterative manner as follows. First, for a given caching probability $\textbf{b}$, we calculate the bandwidth allocation subproblem. Afterwards, the obtained optimal bandwidth is used to update $\textbf{b}$}. The optimal bandwidth for the bandwidth allocation subproblem is given in the next Lemma. 
\begin{lemma}
The objective function of \textbf{P4} in (\ref{optimize_eqn3}) is convex w.r.t. $W_1$, and the optimal bandwidth allocation to the \ac{D2D} communication is given by
\begin{align}
\label{optimal-w-1}
W_1^* = \frac{\zeta_{tot}\sum_{i=1}^{N_f}q_i(\overline{b}_i - \overline{b}_i^{k}) +\varpi \big(\mathcal{O}_{2}W - \zeta_{tot}\sum_{i=1}^{N_f}q_i\overline{b}_i^{k}\big)}{\mathcal{O}_{1}+\varpi\mathcal{O}_{2}},
\end{align}
where $\overline{b}_i = 1 - b_i$ and $\varpi=\sqrt{\frac{\mathcal{O}_{1}\sum_{i=1}^{N_f}q_i(\overline{b}_i - \overline{b}_i^{k})}{\mathcal{O}_{2} \sum_{i=1}^{N_f}q_i\overline{b}_i^{k}}}$
\end{lemma}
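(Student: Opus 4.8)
The plan is to reduce $D(\textbf{b},W_1)$ to a one–dimensional function of $W_1$ for a fixed caching vector $\textbf{b}$, establish convexity via the second derivative, and then extract the minimizer in closed form from the stationarity condition. First I would substitute $W_2 = W - W_1$ and abbreviate the two (now constant) numerators $A \triangleq \sum_{i=1}^{N_f} q_i(\overline{b}_i - \overline{b}_i^{k})$ and $B \triangleq \sum_{i=1}^{N_f} q_i \overline{b}_i^{k}$, so that
\begin{align}
D(W_1) = \frac{A}{\mathcal{O}_{1}W_1 - \zeta_{tot}A} + \frac{B}{\mathcal{O}_{2}(W - W_1) - \zeta_{tot}B}. \nonumber
\end{align}
Each summand has the form $c/(dx-e)$ with denominator strictly positive on the feasible set, precisely because of the stability constraints (\ref{stab1})--(\ref{stab2}); these are exactly what I will lean on throughout.

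For the convexity claim I would differentiate twice. The first term contributes $2A\mathcal{O}_{1}^{2}/(\mathcal{O}_{1}W_1 - \zeta_{tot}A)^{3}$ and the second $2B\mathcal{O}_{2}^{2}/(\mathcal{O}_{2}(W-W_1)-\zeta_{tot}B)^{3}$; since both cubed denominators are positive on the feasible region, each second derivative is nonnegative, and the sum of convex functions is convex. Convexity of $D$ in $W_1$ then guarantees that any interior stationary point is the unique global minimizer, so it suffices to solve $D'(W_1)=0$.

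Next I would set the first derivative to zero, which after cancelling the two single terms reads
\begin{align}
\frac{A\mathcal{O}_{1}}{(\mathcal{O}_{1}W_1 - \zeta_{tot}A)^{2}} = \frac{B\mathcal{O}_{2}}{(\mathcal{O}_{2}(W-W_1) - \zeta_{tot}B)^{2}}. \nonumber
\end{align}
Because both denominators are strictly positive, I can take positive square roots to linearize the equation into $\sqrt{A\mathcal{O}_{1}}\,(\mathcal{O}_{2}(W-W_1) - \zeta_{tot}B) = \sqrt{B\mathcal{O}_{2}}\,(\mathcal{O}_{1}W_1 - \zeta_{tot}A)$. Recognizing that $\varpi = \sqrt{A\mathcal{O}_{1}/(B\mathcal{O}_{2})}$, dividing through by $\sqrt{B\mathcal{O}_{2}}$ collapses this to the linear identity $\varpi(\mathcal{O}_{2}(W-W_1) - \zeta_{tot}B) = \mathcal{O}_{1}W_1 - \zeta_{tot}A$. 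Collecting the $W_1$ terms gives $W_1(\mathcal{O}_{1}+\varpi\mathcal{O}_{2}) = \zeta_{tot}A + \varpi(\mathcal{O}_{2}W - \zeta_{tot}B)$, and re-expanding $A$ and $B$ reproduces exactly the claimed expression (\ref{optimal-w-1}).

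The main obstacle is the sign bookkeeping in the square-root step: that manipulation is legitimate only because the stability conditions force both denominators to be strictly positive, which fixes the correct (positive) branch; selecting the other sign would introduce a spurious extraneous root that does not correspond to the minimizer. I would also remark that (\ref{optimal-w-1}) is the unconstrained interior stationary point, so if it lands outside $[0,W]$ one simply projects onto the boundary, which is immediate given that $D$ is convex and therefore monotone on either side of the stationary point.
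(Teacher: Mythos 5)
Your proof is correct and follows essentially the same route as the paper: reduce $D$ to a one-variable function of $W_1$ with constant numerators, verify convexity from the sign of the second derivative using the stability conditions (\ref{stab1})--(\ref{stab2}) together with $\overline{b}_i \geq \overline{b}_i^{k}$, and then solve the stationarity condition. You additionally carry out the square-root/linearization step that produces (\ref{optimal-w-1}) explicitly --- a detail the paper omits ``for brevity'' --- and your observation that the stability conditions fix the positive branch of the square root is exactly the right justification for that step.
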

\begin{proof}
$D(\textbf{b},W_1)$ can be written as 
\begin{align}
\label{optimize_eqn3_p1}
\sum_{i=1}^{N_f}q_i(\overline{b}_i - \overline{b}_i^{k})\big(\mathcal{O}_{1}W_1 - \zeta_{tot}\sum_{i=1}^{N_f}q_i(\overline{b}_i - \overline{b}_i^{k})\big)^{-1} + \sum_{i=1}^{N_f}q_i\overline{b}_i^{k}\big(\mathcal{O}_{2}W_2 - \zeta_{tot}\sum_{i=1}^{N_f}q_i\overline{b}_i^{k}\big)^{-1}, \nonumber
\end{align}
The second derivative $\frac{\partial^2 D(\textbf{b},W_1)}{\partial W_1^2}$ is hence given by 
\begin{align}
 2\mathcal{O}_{1}^2\sum_{i=1}^{N_f}q_i(\overline{b}_i - \overline{b}_i^{k})\big(\mathcal{O}_{1}W_1 - \zeta_{tot}\sum_{i=1}^{N_f}q_i(\overline{b}_i - \overline{b}_i^{k})\big)^{-3} + 2\mathcal{O}_{2}^2\sum_{i=1}^{N_f}q_i\overline{b}_i^{k}\big(\mathcal{O}_{2}W_2 - \zeta_{tot}\sum_{i=1}^{N_f}q_i\overline{b}_i^{k}\big)^{-3},  \nonumber 	
\end{align}
The stability conditions require that $\mu_1 = \mathcal{O}_{1}W_1 > \zeta_{tot}\sum_{i=1}^{N_f}q_i(\overline{b}_i - \overline{b}_i^{k})$ and $\mu_2 =\mathcal{O}_{2}W_2 > \zeta_{tot}\sum_{i=1}^{N_f}q_i\overline{b}_i^{k}$. Also, $\overline{b}_i \geq \overline{b}_i^{k}$ by definition. Hence, $\frac{\partial^2 D(\textbf{b},W_1)}{\partial W_1^2}  > 0$, and the objective function is a convex function of $W_1$. The optimal bandwidth allocation can be obtained from the \ac{KKT} conditions similar to problems \textbf{P2} and \textbf{P3}, with the details omitted for brevity.
\end{proof}
Given $W_1^*$ from the bandwidth allocation subproblem, the caching probability subproblem can be written as
\begin{align}
\textbf{P5:} \quad \quad&\underset{\textbf{b}}{\text{min}} \quad D(\textbf{b},W_1^*) \\
&\textrm{s.t.}\quad  (\ref{const110}), (\ref{const111}),  (\ref{stab1}), (\ref{stab2})  \nonumber 
\end{align}
The caching probability subproblem \textbf{P5} is a sum of two fractional functions, where the first fraction is in the form of a concave over convex functions while the second fraction is in the form of a convex over concave functions. The first fraction structure, i.e., concave over convex functions, renders solving this problem using fractional programming (FP) very challenging.\footnote{A quadratic transform technique for tackling the multiple-ratio concave-convex FP problem is recently used to solve a minimization of fractional functions that has the form of convex over concave functions, whereby an equivalent problem is solved with the objective function reformulated as a difference between convex minus concave functions \cite{shen2018fractional}.} 
\textcolor{blue}{Moreover, the constraint (\ref{const110}) is a concave w.r.t. $\textbf{b}$. 
Hence, we adopt the interior point method to obtain local optimal solution of $\textbf{b}$ given the optimal bandwidth $W_1^*$, which depends on the initial value input to the algorithm \cite{boyd2004convex}. Nonetheless, we can increase the probability to find a near-optimal solution of problem \textbf{P5} by using the interior point method with multiple random initial values and then picking the solution with lowest weighted average delay. The explained procedure is repeated until the value of \textbf{P4}'s objective function converges to a pre-specified accuracy.}
\section{Numerical Results}
\begin{table}[ht]
\caption{Simulation Parameters} 
\centering 
\begin{tabular}{c c  c} 
\hline\hline 
Description & Parameter & Value  \\ [0.5ex] 
\hline 
\textcolor{black}{System bandwidth} & W & \SI{20}{\mega\hertz}  \\ 
\ac{BS} transmission power & $P_b$ & \SI{43}{\deci\bel\of{m}}  \\
\ac{D2D} transmission power & $P_d$ & \SI{23}{\deci\bel\of{m}}  \\
Displacement standard deviation & $\sigma$ & \SI{10}{\metre} \\ 
Popularity index&$\beta$&1\\
Path loss exponent&$\alpha$&4\\
Library size&$N_f$&500 files\\
Cache size per device&$M$&10 files\\
Average number of devices per cluster&$\overline{n}$&5\\
Density of $\Phi_p$&$\lambda_{p}$&20 clusters/\SI{}{km}$^2$ \\
Average content size&$\textcolor{black}{\overline{S}}$&\SI{5}{MBytes} \\		
$\sir$ threshold&$\theta$&\SI{0}{\deci\bel}\\
\textcolor{black}{Total request arrival rate}&$\zeta_{tot}$&\SI{2}{request/sec}\\
\hline 
\end{tabular}
\label{ch3:table:sim-parameter} 
\end{table}
At first, we validate the developed mathematical model via Monte Carlo simulations. Then we benchmark the proposed caching scheme against conventional caching schemes. Unless otherwise stated, the network parameters are selected as shown in Table \ref{ch3:table:sim-parameter}. 
\subsection{Offloading Gain Results}
\begin{figure}[!h]
	\begin{center}
		\includegraphics[width=3.5in]{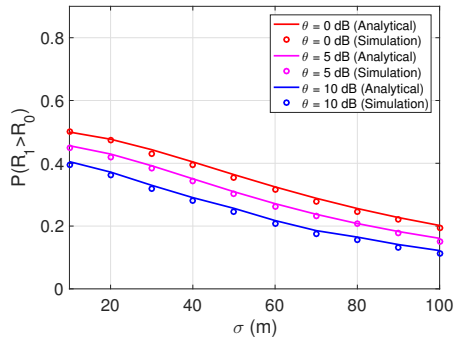}			
		\caption {The probability that the \ac{D2D} achievable rate is greater than a threshold $R_0$ versus standard deviation $\sigma$.}
		\label{prob_r_geq_r0}
	\end{center}
\end{figure}
In this subsection, we present the offloading gain performance for the proposed caching model. 
In Fig.~\ref{prob_r_geq_r0}, we verify the accuracy of the analytical results for the probability $\mathbb{P}(R_1>R_0)$. The theoretical and simulated results are plotted together, and they are consistent. We can observe that the probability $\mathbb{P}(R_1>R_0)$ decreases monotonically with the increase of $\sigma$. This is because as $\sigma$ increases, the serving distance increases and the inter-cluster interfering distance between the out-of-cluster interferers and the typical device decreases, and equivalently, the $\sir$ decreases. It is also shown that $\mathbb{P}(R_1>R_0)$ decreases with the $\sir$ threshold $\theta$ as the channel becomes more prone to be in outage when increasing the $\sir$ threshold $\theta$.
\begin{figure*}
\centering
  \subfigure[$p=p^*$. ]{\includegraphics[width=3.0in]{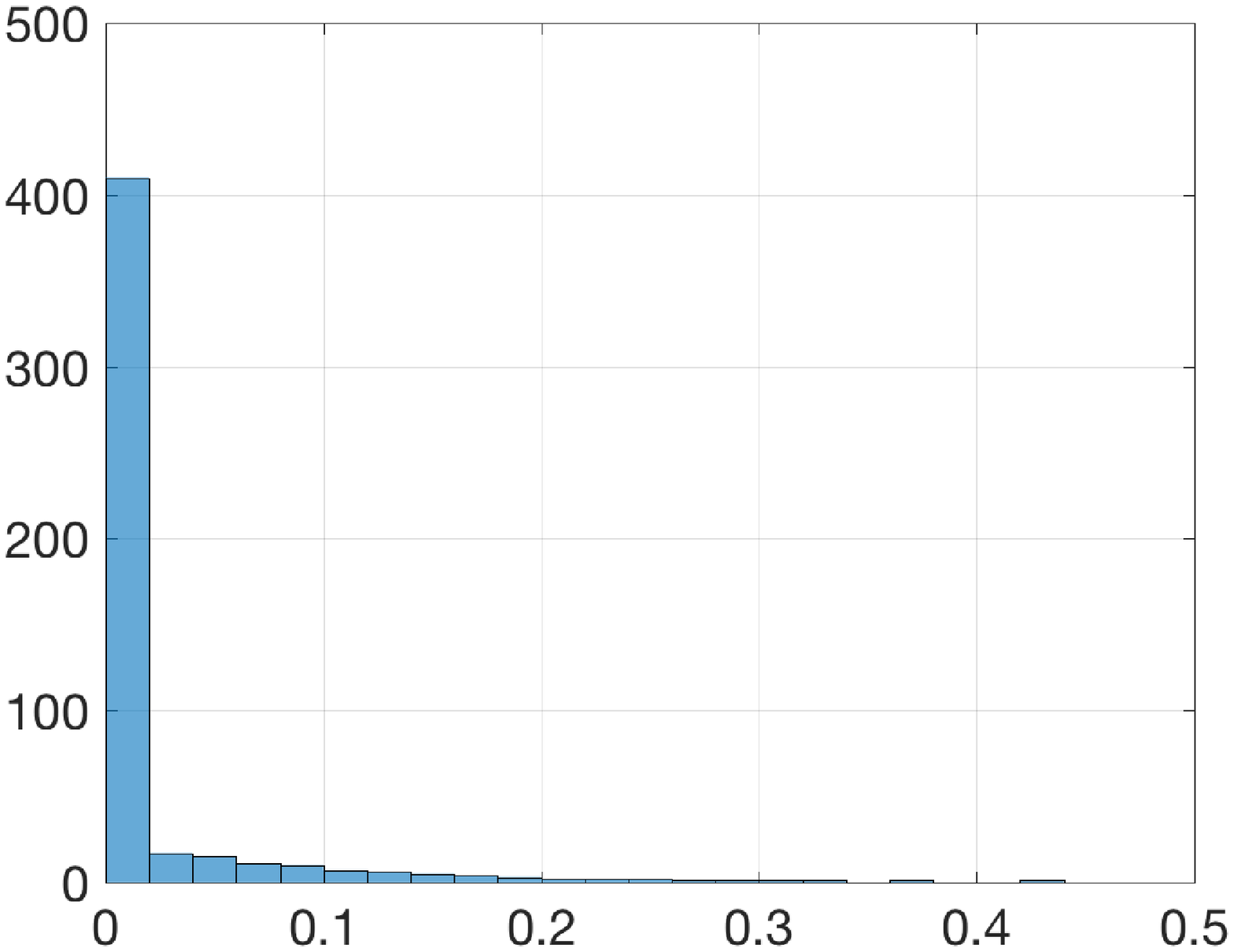}
\label{histogram_b_i_p_star}}
\subfigure[$p > p^*$.]{\includegraphics[width=3.0in]{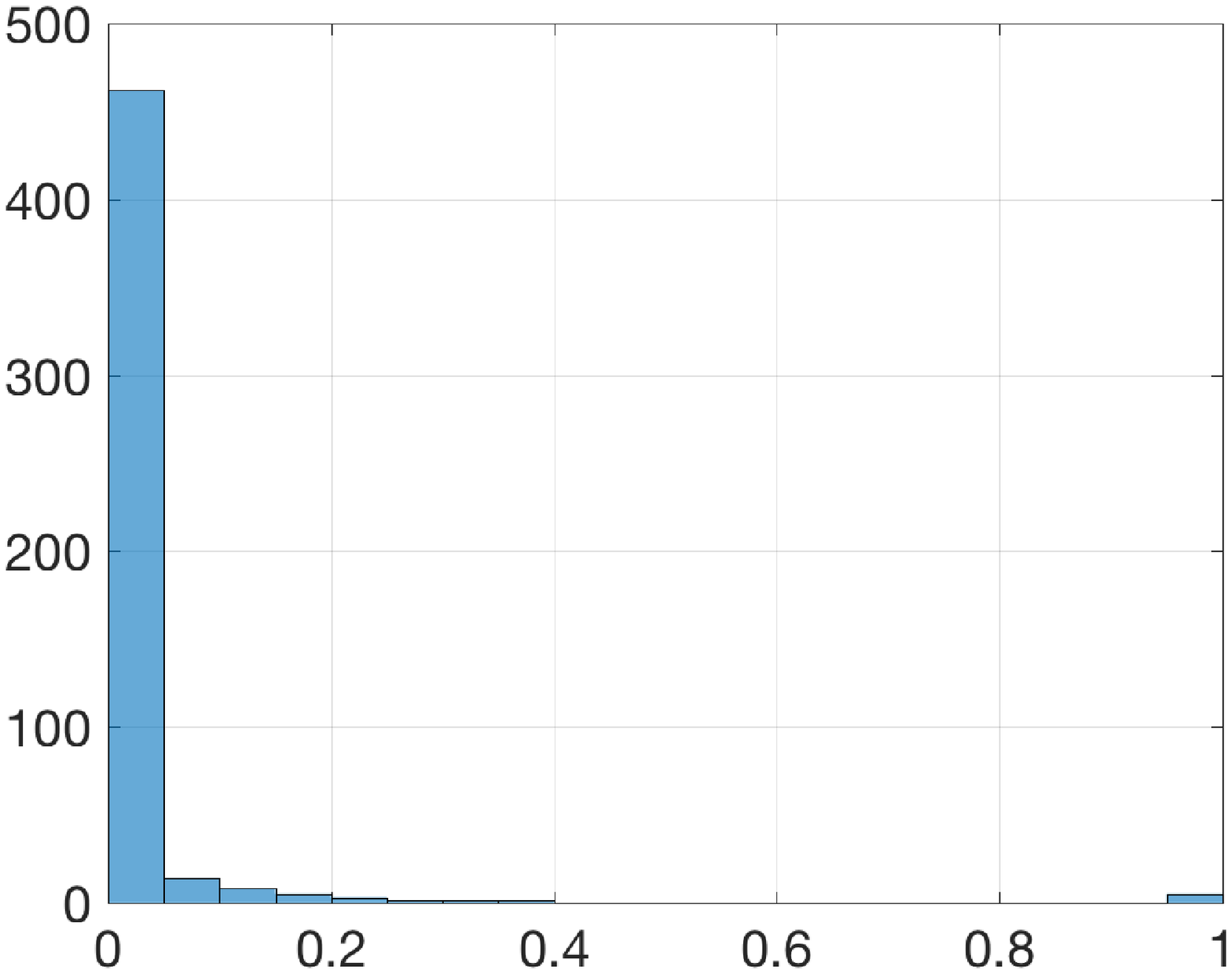}		
  \label{histogram_b_i_p_leq_p_star}}
\caption{Histogram of the optimal caching probability $\textbf{b}^*$ when (a) $p=p^*$ and (b) $p > p^*$.}
\label{histogram_b_i}
\end{figure*}
To show the effect of $p$ on the caching probability, in Fig.~\ref{histogram_b_i}, we plot the histogram of the optimal caching probability at different values of $p$, where $p=p^*$ in Fig.~\ref{histogram_b_i_p_star} and $p>p^*$ in Fig.~\ref{histogram_b_i_p_leq_p_star}. It is clear from the histograms that the optimal caching probability $\textbf{b}^*$ tends to be more skewed when $p > p^*$, i.e., when $\mathbb{P}(R_1>R_0)$ decreases. This shows that file sharing is more difficult when $p$ is larger than the optimal access probability. More precisely, for $p>p^*$, the outage probability is high due to the aggressive interference. In such a low coverage probability regime, each device tends to cache the most popular files leading to fewer opportunities of content transfer between the devices. 
 \begin{figure}[!h]
	\begin{center}
		\includegraphics[width=3.5in]{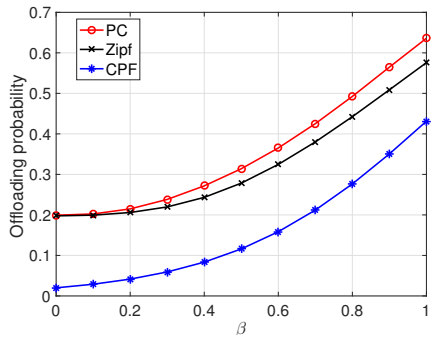}			
		\caption {The offloading probability versus the popularity of files $\beta$ under different caching schemes, namely, \ac{PC}, Zipf, and \ac{CPF} ($\overline{n}=10$, $\sigma=$\SI{5}{\metre}).}			
		\label{offloading_gain_vs_beta}
	\end{center}
\end{figure}

Last but not least, Fig.~\ref{offloading_gain_vs_beta} manifests the prominent effect of the files' popularity on the offloading gain. We compare the offloading gain of three different caching schemes, namely, the proposed \ac{PC}, Zipf's caching (Zipf), and \ac{CPF}. We can see that the offloading gain under the \ac{PC} scheme attains the best performance as compared to other schemes. Also, we note that both \ac{PC} and Zipf schemes encompass the same offloading gain when $\beta=0$ owing to the uniformity of content popularity.  
\subsection{\textcolor{black}{Energy Consumption Results}}				
\begin{figure}[!h]
	\begin{center}
		\includegraphics[width=3.5in]{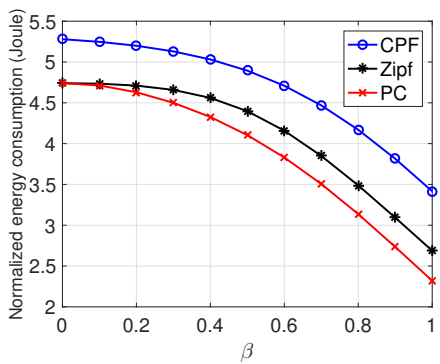}			
		\caption {Normalized energy consumption versus popularity exponent $\beta$.}			
		\label{energy_vs_beta}
	\end{center}
\end{figure}
The results in this part are given for the energy consumption. 
Fig.~\ref{energy_vs_beta} shows the energy consumption, \textcolor{black}{normalized to the mean number of devices per cluster}, versus $\beta$ under different caching schemes, namely, \ac{PC}, Zipf, and \ac{CPF}. We can see that the minimized energy consumption under the proposed \ac{PC} scheme attains the best performance as compared to other schemes. Also, it is clear that the consumed energy decreases with $\beta$. This can be justified by the fact that as $\beta$ increases, fewer files are frequently requested which are more likely to be cached among the devices under \ac{PC}, \ac{CPF}, and the Zipf schemes. These few files therefore are downloadable from the devices via low power \ac{D2D} communication. 

\begin{figure}[!h]
	\begin{center}
		\includegraphics[width=3.5in]{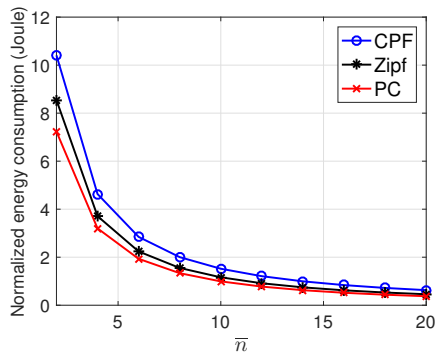}		
		\caption {\textcolor{black}{Normalized energy consumption versus the mean number of devices per cluster.}} 
		\label{energy_vs_n}
	\end{center}
\end{figure}
We plot the normalized energy consumption versus the \textcolor{black}{mean number of devices per cluster} in Fig.~\ref{energy_vs_n}. First, we see that energy consumption decreases with the mean number of devices per cluster. As the number of devices per cluster increases, it is more probable to obtain requested files via low power \ac{D2D} communication. When the number of devices per cluster is relatively large, the normalized energy consumption tends to flatten as most of the content becomes cached at the cluster devices. 
\subsection{Delay Results}						
\begin{figure*}  [!h]
\centering
  \subfigure[Weighted average delay versus the popularity exponent $\beta$. ]{\includegraphics[width=2.8in]{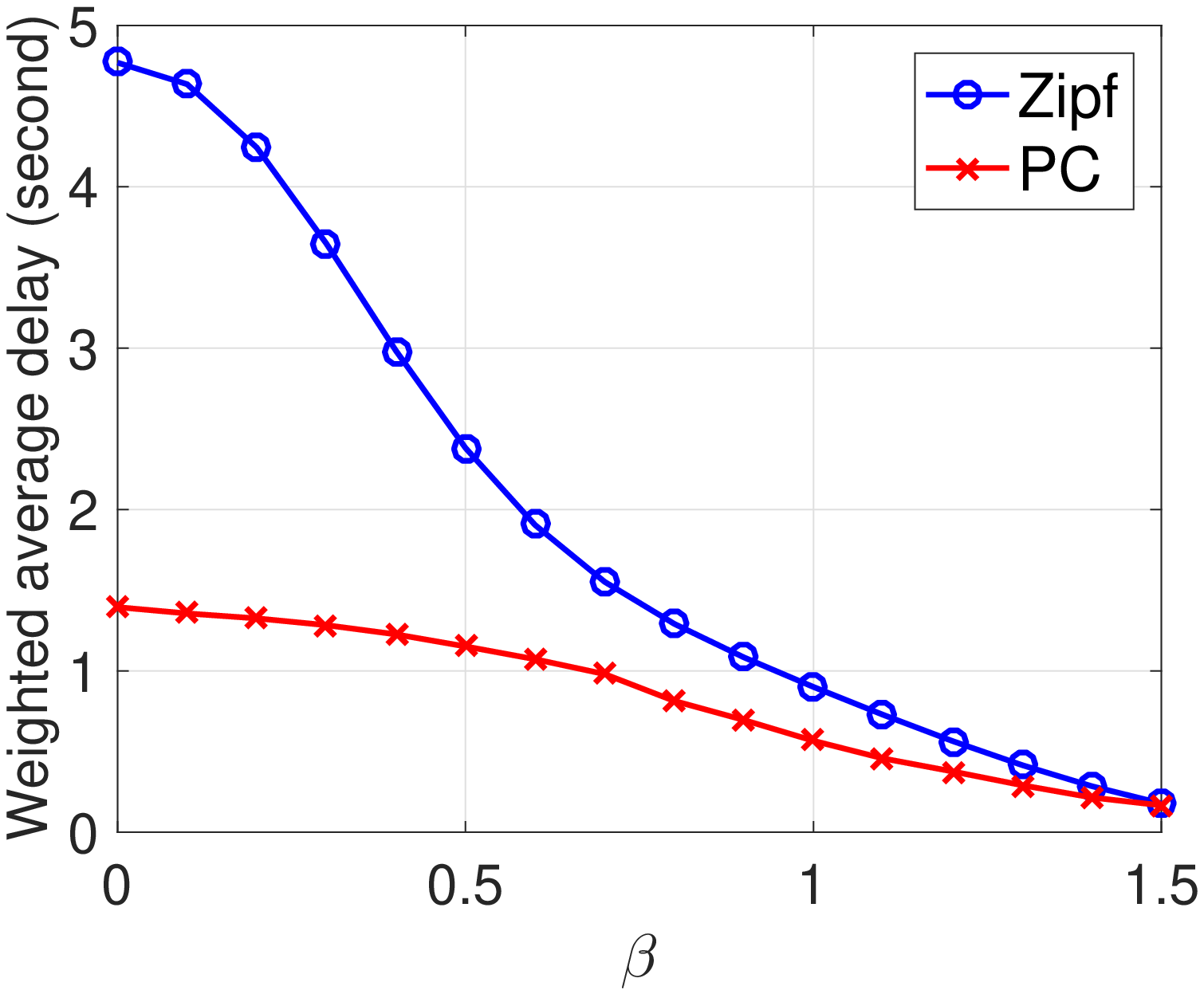}
\label{delay_compare}}
\subfigure[Optimal allocated bandwidth versus the popularity exponent $\beta$.]{\includegraphics[width=2.8in]{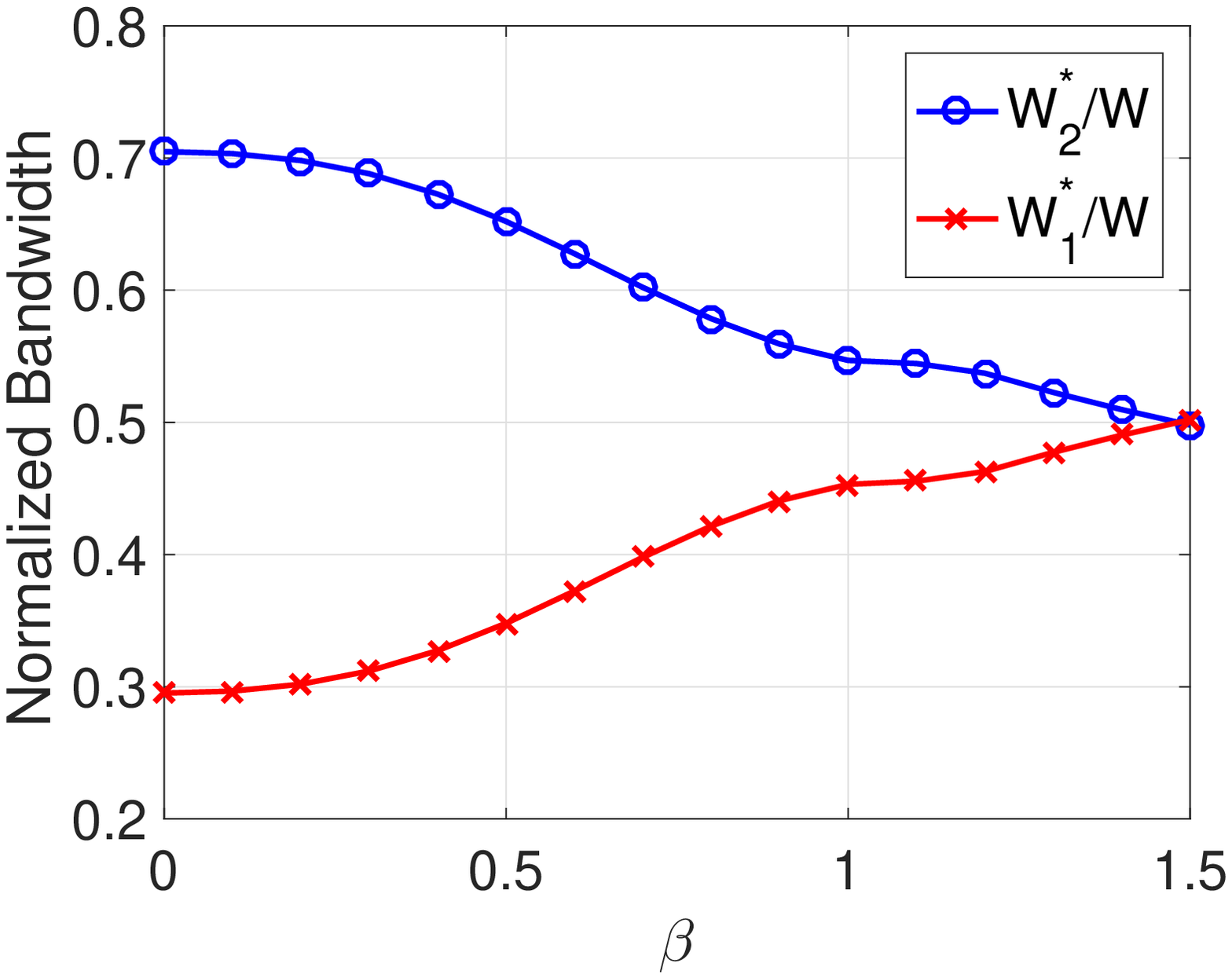}		
  \label{BW_compare}}
\caption{Evaluation and comparison of average delay for the proposed joint \ac{PC} and bandwidth allocation scheme with different baseline schemes against popularity exponent $\beta$, $N_f = 100$, $M = 4$, $k = 8$.}
\label{delay-analysis}
\end{figure*}				

\textcolor{blue}{The results in this part are devoted to the average delay metric. The performance of the proposed joint \ac{PC} and bandwidth allocation scheme is evaluated in Fig.~\ref{delay-analysis}, and the optimized bandwidth allocation is also shown. Firstly, in Fig.~\ref{delay_compare}, we compare the average delay for two different caching schemes, namely, \ac{PC}, and Zipf's scheme. We can see that the minimized average delay under the proposed joint \ac{PC} and bandwidth allocation scheme attains substantially better performance as compared to the Zipf's scheme with fixed bandwidth allocation (i.e., $W_1=W_2=W/2$). Also, we see that, in general, the average delay monotonically decreases with $\beta$ when a fewer number of files undergoes the highest demand. 
Secondly, Fig.~\ref{BW_compare} manifests the effect of the files' popularity $\beta$ on the allocated bandwidth. It is shown that optimal \ac{D2D} allocated bandwidth $W_1^*$ continues increasing with $\beta$. This can be interpreted as follows. When $\beta$ increases, a fewer number of files become highly demanded. These files can be entirely cached among the devices. To cope with such a larger number of requests served via the \ac{D2D} communication, the \ac{D2D} allocated bandwidth needs to be increased.}

\textcolor{blue}{Last but not least, Fig.~\ref{scaling} shows the geometrical scaling effects on the system performance, e.g., the effect of clusters' density $\lambda_p$ and the displacement standard deviation $\sigma$ on the \ac{D2D} coverage probability $\pcd$, optimal allocated bandwidth $W_1^*$, and the average delay. In Fig.~\ref{cache_size}, we plot the \ac{D2D} coverage probability $\pcd$ versus the displacement standard deviation $\sigma$ for different clusters' density $\lambda_p$. It is clear from the plot that $\pcd$ monotonically decreases with both $\sigma$ and $\lambda_p$. Obviously, increasing $\sigma$ and $\lambda_p$ results in larger serving distance, i.e., higher path-loss effect, and shorter interfering distance, i.e., higher interference power received by the typical device, respectively. This explains the encountered degradation for $\pcd$ with $\sigma$ and $\lambda_p$. 
In Fig.~\ref{optimal-w}, we plot the optimal allocated bandwidth $W_1^*$ normalized to $W$ versus the displacement standard deviation $\sigma$ for different clusters' density $\lambda_p$. Here also, it is quite obvious that $W_1^*$ tends to increase with both $\sigma$ and $\lambda_p$. This behavior can be directly understood from (\ref{optimal-w-1}) where $W_1^*$ is inversely proportional to $\mathcal{O}_{1} = \frac{\pcd {\rm log}_2(1+\theta)}{\overline{S}}$, and $\pcd$ decreases with $\sigma$ and $\lambda_p$ as discussed above. More precisely, while the \ac{D2D} service rate $\mu_1$ decreases with $\pcd$ since $\mu_1 = \frac{\pcd W_1{\rm log}_2(1+\theta)}{\overline{S}}$, the optimal allocated bandwidth $W_1^*$ tends to increase with $\pcd$ to compensate for the service rate degradation, and eventually, minimizing the weighted average delay.
In Fig.~\ref{av-delay}, we plot the weighted average delay versus the displacement standard deviation $\sigma$ for different clusters' density $\lambda_p$. Following the same interpretations as in Fig.~\ref{cache_size} and Fig.~\ref{optimal-w}, we can notice that the weighted average delay monotonically increases with  $\sigma$ and $\lambda_p$.}
\begin{figure*} [t!]	
\vspace{-0.5cm}
\centering
  \subfigure[\ac{D2D} coverage probability $\pcd$ versus the displacement standard deviation $\sigma$ for different clusters' density $\lambda_p$.]{\includegraphics[width=2.0in]{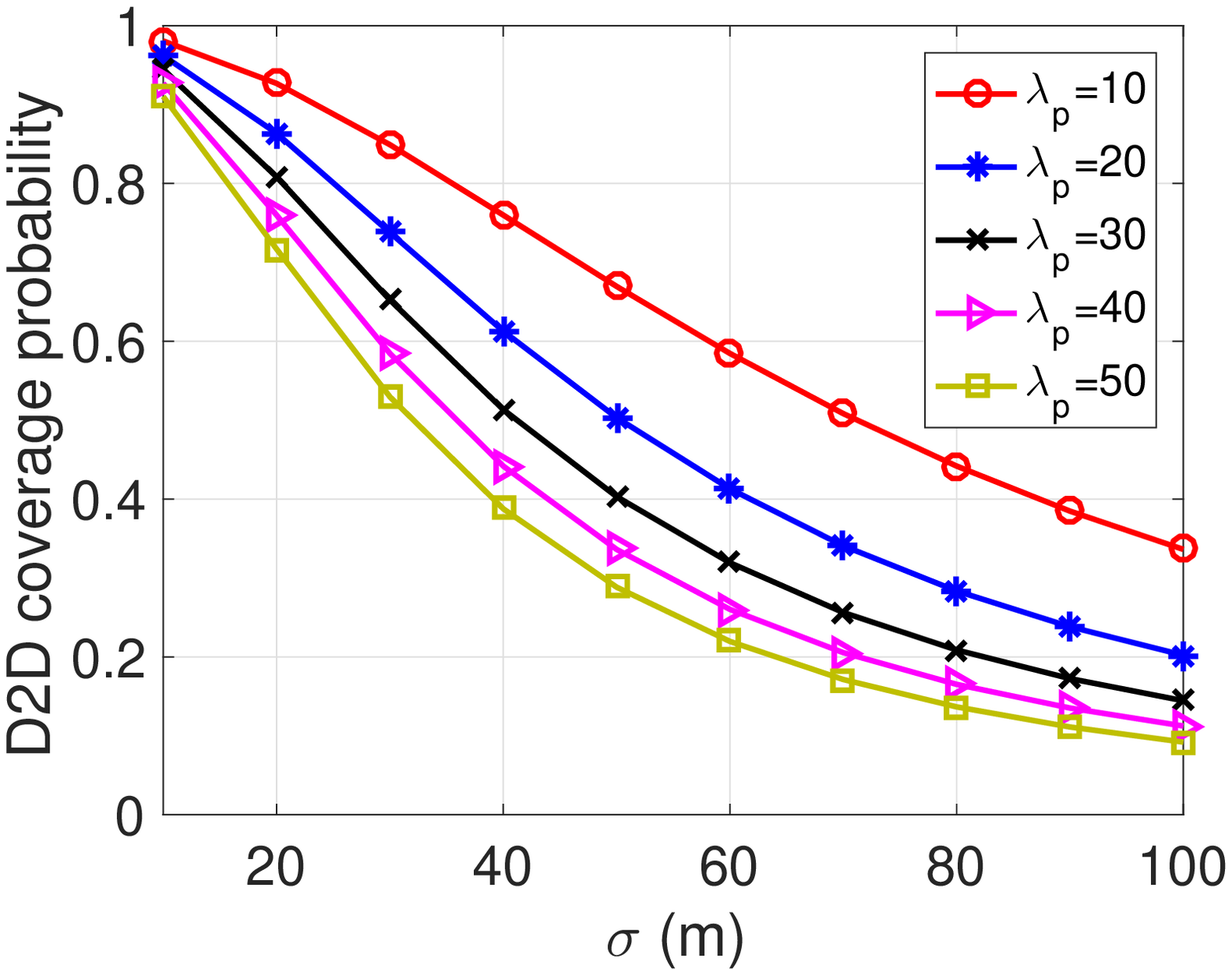}
\label{cache_size}}
\subfigure[Optimal allocated bandwidth versus the displacement standard deviation $\sigma$ for different clusters' density $\lambda_p$.]{\includegraphics[width=2.0in]{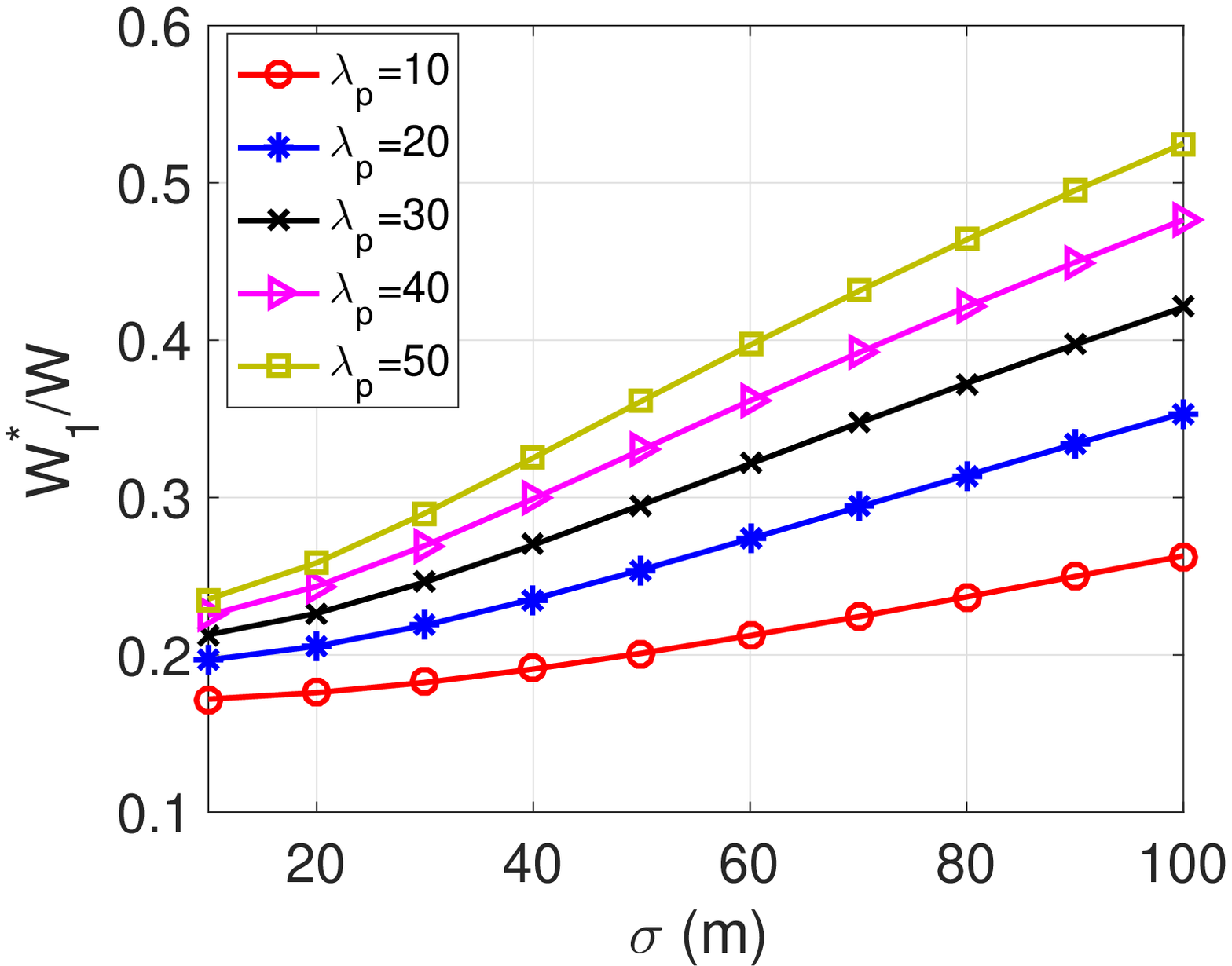}		 
  \label{optimal-w}}
\subfigure[Weighted average delay versus the displacement standard deviation $\sigma$ for different clusters' density $\lambda_p$.]{\includegraphics[width=2.0in]{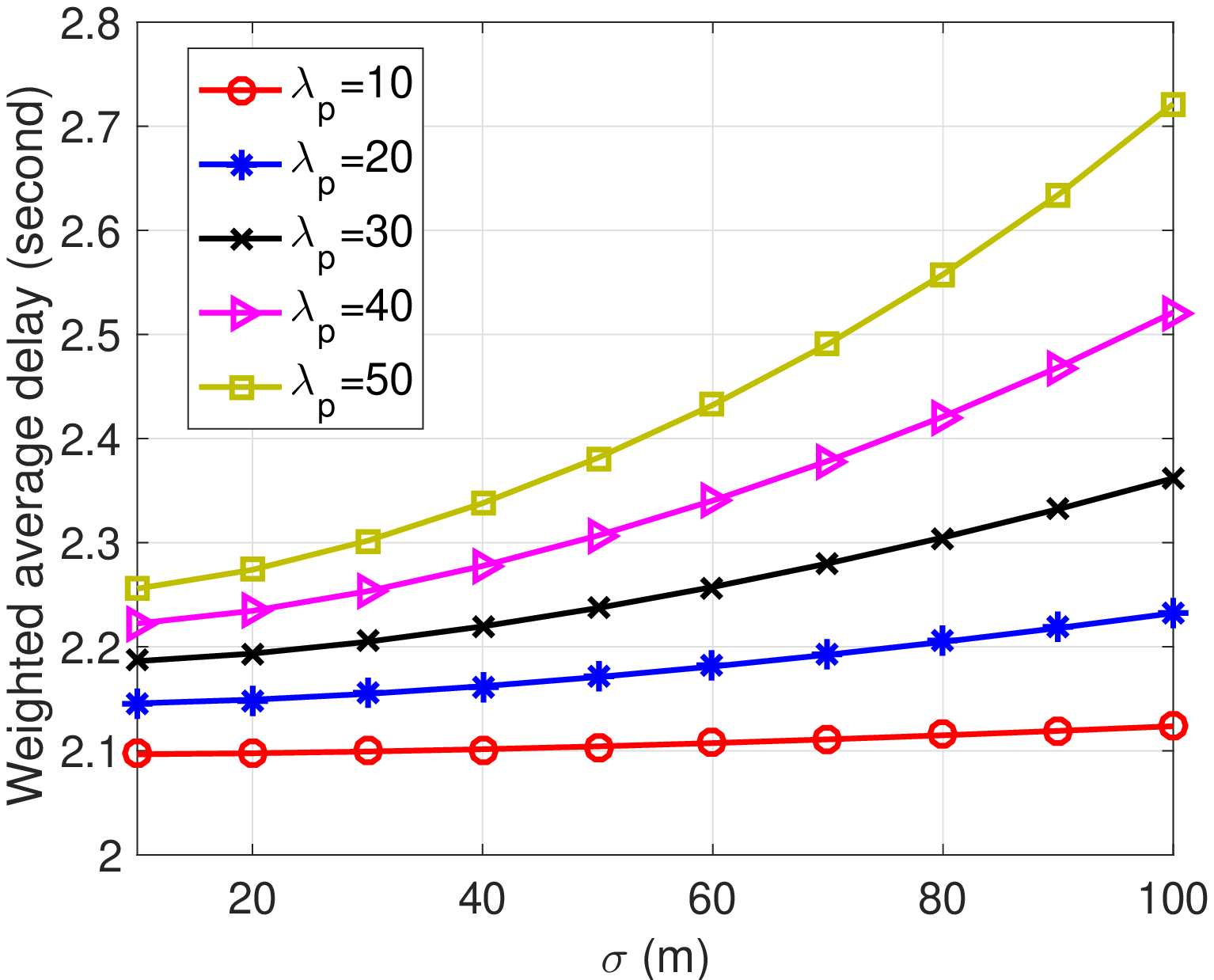}		 
  \label{av-delay}}
\caption{Effect of geometrical parameters, e.g., clusters' density $\lambda_p$ and the displacement standard deviation $\sigma$ on the system performance, $\beta= 0.5$, $N_f = 100$, $M = 4$, $k = 8$.}	
\label{scaling}
\end{figure*}
\section{Conclusion}
In this work, we conduct a comprehensive analysis of the joint communication and caching for a clustered \ac{D2D} network with random probabilistic caching incorporated at the devices. We first maximize the offloading gain of the proposed system by jointly optimizing the channel access and caching probability. We obtain the optimal channel access probability, and the optimal caching probability is then characterized. We show that deviating from the optimal access probability $p^*$ makes file sharing more difficult. More precisely, the system is too conservative for small access probabilities, while the interference is too aggressive for larger access probabilities.  Then, we minimize the energy consumption of the proposed clustered \ac{D2D} network. We formulate the energy minimization problem and show that it is convex and the optimal caching probability is obtained. We show that a content with a large size or low popularity has a small probability to be cached. Finally, we adopt a queuing model for the devices' traffic within each cluster to investigate the network average delay. Two M/G/1 queues are employed to model the \ac{D2D} and \ac{BS}-to-Device communications. We then derive an expression for the weighted average delay per request. We observe that the average delay is dependent on the caching probability and bandwidth allocated, which control respectively the arrival rates and service rates for the two modeling queues. Therefore, we minimize the per request weighted average delay by jointly optimizing bandwidth allocation between \ac{D2D} and \ac{BS}-to-Device communication and the caching probability. The delay minimization problem is shown to be non-convex. Applying the \ac{BCD} optimization technique, the joint minimization problem can be solved in an iterative manner. Results show up to $10\%$, $17\%$, and $300\%$ improvement gain in the offloading gain, energy consumption, and average delay, respectively, compared to the Zipf's caching technique.
\begin{appendices}

\section{Proof of lemma 1}
Laplace transform of the inter-cluster aggregate interference $I_{\Phi_p^{!}}$ can be evaluated as
\begin{align}
 \mathscr{L}_{I_{\Phi_p^{!}}}(s) &= \mathbb{E} \Bigg[e^{-s \sum_{\Phi_p^{!}} \sum_{y \in \mathcal{B}^p}  g_{y_{x}}  \lVert x + y\rVert^{-\alpha}} \Bigg] \nonumber \\ 
   &= \mathbb{E}_{\Phi_p} \Bigg[\prod_{\Phi_p^{!}} \mathbb{E}_{\Phi_{cp},g_{y_{x}}} \prod_{y \in \mathcal{B}^p}  e^{-s g_{y_{x}}   \lVert x + y\rVert^{-\alpha}} \Bigg]  \nonumber \\
  &= \mathbb{E}_{\Phi_p} \Bigg[\prod_{\Phi_p^{!}} \mathbb{E}_{\Phi_{cp}} \prod_{y \in \mathcal{B}^p} \mathbb{E}_{g_{y_{x}}}  e^{-s g_{y_{x}}   \lVert x + y\rVert^{-\alpha}} \Bigg]  \nonumber \\
    &\overset{(a)}{=} \mathbb{E}_{\Phi_p} \Bigg[\prod_{\Phi_p^{!}} \mathbb{E}_{\Phi_{cp}} \prod_{y \in \mathcal{B}^p} \frac{1}{1+s \lVert x + y\rVert^{-\alpha}} \Bigg]  \nonumber \\
     &\overset{(b)}{=} \mathbb{E}_{\Phi_p} \prod_{\Phi_p^{!}} {\rm exp}\Big(-p\overline{n} \int_{\mathbb{R}^2}\Big(1 - \frac{1}{1+s \lVert x + y\rVert^{-\alpha}}\Big)f_Y(y)\dd{y}\Big) 	 \nonumber	
     \\
          &\overset{(c)}{=}  {\rm exp}\Bigg(-\lambda_p \int_{\mathbb{R}^2}\Big(1 -  {\rm exp}\Big(-p\overline{n} \int_{\mathbb{R}^2}\Big(1 - \frac{1}{1+s \lVert x + y\rVert^{-\alpha}}\Big)f_Y(y)\dd{y}\Big)\dd{x}\Bigg)	             	
\end{align}
where $\varphi(s,v) = \int_{u=0}^{\infty}\frac{s}{s+ u^{\alpha}}f_U(u|v)\dd{u}$; (a) follows from the Rayleigh fading assumption, (b) 
follows from the \ac{PGFL} of Gaussian \ac{PPP} $\Phi_{cp}$, and (c) follows from the \ac{PGFL} of the parent \ac{PPP} $\Phi_p$. By using change of variables $z = x + y$ with $\dd z = \dd y$, we proceed as
\begin{align}
\mathscr{L}_{I_{\Phi_p^{!}}}(s) &\overset{}{=}  {\rm exp}\Bigg(-\lambda_p \int_{\mathbb{R}^2}\Big(1 -  {\rm exp}\Big(-p\overline{n} \int_{\mathbb{R}^2}\Big(1 - \frac{1}{1+s \lVert z\rVert^{-\alpha}}\Big)f_Y(z-x)\dd{y}\Big)\dd{x}\Bigg)			 \nonumber	
          \\
          &\overset{(d)}{=}  {\rm exp}\Bigg(-2\pi\lambda_p \int_{v=0}^{\infty}\Big(1 -  {\rm exp}\Big(-p\overline{n} \int_{u=0}^{\infty}\Big(1 - \frac{1}{1+s  u^{-\alpha}}\Big)f_U(u|v)\dd{u}\Big)v\dd{v}\Bigg)		 \nonumber
          \\
         &=  {\rm exp}\Bigg(-2\pi\lambda_p \int_{v=0}^{\infty}\Big(1 -  {\rm exp}\big(-p\overline{n} \int_{u=0}^{\infty}
         \frac{s}{s+ u^{\alpha}}f_U(u|v)\dd{u}\big)v\dd{v}\Big)\Bigg)		 \nonumber
         \\
         &=  {\rm exp}\Big(-2\pi\lambda_p \int_{v=0}^{\infty}\Big(1 -  {\rm e}^{-p\overline{n} \varphi(s,v)}\Big)v\dd{v}\Big),	
\end{align}
where (d) follows from converting the cartesian coordinates to the polar coordinates with $u=\lVert z\rVert$. 
\textcolor{blue}{To clarify how in (d) the normal distribution $f_Y(z-x)$ is converted to the Rice distribution $f_U(u|v)$, consider a remote cluster centered at $x \in \Phi_p^!$, with a distance $v=\lVert x\rVert$ from the origin. Every interfering device belonging to the cluster centered at $x$ has its coordinates in $\mathbb{R}^2$ chosen independently from a gaussian distribution with standard deviation $\sigma$. Then, by definition, the distance from such an interfering device to the origin, denoted as $u$, has a Rice distribution, denoted as $f_U(u|v)=\frac{u}{\sigma^2}\mathrm{exp}\big(-\frac{u^2 + v^2}{2\sigma^2}\big) I_0\big(\frac{uv}{\sigma^2}\big)$, where $I_0$ is the modified Bessel function of the first kind with order zero and $\sigma$ is the scale parameter.
Hence, Lemma 1 is proven.}

\section {Proof of lemma 2}
Laplace transform of the intra-cluster aggregate interference $I_{\Phi_c}$, conditioning on the distance $v_0$ from the cluster center to the origin, see Fig~\ref{distance}, is written as
\begin{align}
 \mathscr{L}_{I_{\Phi_c} }(s|v_0) &= \mathbb{E} \Bigg[e^{-s \sum_{y \in \mathcal{A}^p}  g_{y_{x_0}}  \lVert x_0 + y\rVert^{-\alpha}} \Bigg] \nonumber 
 \\ 
   &=  \mathbb{E}_{\Phi_{cp},g_{y_{x_0}}} \prod_{y \in\mathcal{A}^p}  e^{-s g_{y_{x_0}}   \lVert x_0 + y\rVert^{-\alpha}}  \nonumber 
   \\
    &=  \mathbb{E}_{\Phi_{cp}} \prod_{y \in\mathcal{A}^p}  \mathbb{E}_{g_{y_{x_0}}} e^{-s g_{y_{x_0}}   \lVert x_0 + y\rVert^{-\alpha}}  \nonumber 
   \\
    &\overset{(a)}{=} \mathbb{E}_{\Phi_{cp}} \prod_{y \in\mathcal{A}^p} \frac{1}{1+s \lVert x_0 + y\rVert^{-\alpha}} 
     \nonumber 
     \\
     &\overset{(b)}{=}  {\rm exp}\Big(-p\overline{n} \int_{\mathbb{R}^2}\Big(1 - \frac{1}{1+s \lVert x_0 + y\rVert^{-\alpha}}\Big)f_{Y}(y)\dd{y}\Big) 	 \nonumber	
     \\
        &\overset{(c)}{=} {\rm exp}\Big(-p\overline{n} \int_{\mathbb{R}^2}\Big(1 - \frac{1}{1+s \lVert z_0\rVert^{-\alpha}}\Big)f_{Y}(z_0-x_0)\dd{z_0}\Big)
\end{align}
where (a) follows from the Rayleigh fading assumption, (b)  follows from the \ac{PGFL} of the Gaussian \ac{PPP} $\Phi_{cp}$, (c) follows from changing of variables $z_0 = x_0 + y$ with $\dd z_0 = \dd y$. By converting the cartesian coordinates to the polar coordinates, with $h=\lVert z_0\rVert$, we get  
\begin{align}
 \mathscr{L}_{I_{\Phi_c} }(s|v_0) &\overset{}{=}  {\rm exp}\Big(-p\overline{n} \int_{h=0}^{\infty}\Big(1 - \frac{1}{1+s  h^{-\alpha}}\Big)f_H(h|v_0)\dd{h}\Big)		 \nonumber
          \\
         &=  {\rm exp}\Big(-p\overline{n} \int_{h=0}^{\infty}\frac{s}{s+ h^{\alpha}}f_H(h|v_0)\dd{h}\Big)	
 \end{align}
 By neglecting the correlation of the intra-cluster interfering distances as in \cite{clustered_twc}, i.e., the common part $x_0$ in the intra-cluster interfering distances $\lVert x_0 + y\rVert$, $y \in \mathcal{A}^p$, we get.
\begin{align}
         \mathscr{L}_{I_{\Phi_c} }(s) &\approx  {\rm exp}\Big(-p\overline{n} \int_{h=0}^{\infty}\frac{s}{s+ h^{\alpha}}f_H(h)\dd{h}\Big)
\end{align}
Hence, Lemma 2 is proven. 
\section {Proof of lemma 3}
First, to prove concavity, we proceed as follows. 
\begin{align}
\frac{\partial \mathbb{P}_o}{\partial b_i} &= q_i + q_i\big(\overline{n}(1-b_i)e^{-\overline{n}b_i}-(1-e^{-\overline{n}b_i})\big)\mathbb{P}(R_1>R_0)		 \\
\frac{\partial^2 \mathbb{P}_o}{\partial b_i^2} &= -q_i\big(\overline{n}e^{-\overline{n}b_i} + \overline{n}^2(1-b_i)e^{-\overline{n}b_i} + \overline{n}e^{-\overline{n}b_i}\big)\mathbb{P}(R_1>R_0)
\end{align}
It is clear that the second derivative $\frac{\partial^2 \mathbb{P}_o}{\partial b_i^2}$ is always negative, and 
$\frac{\partial^2 \mathbb{P}_o}{\partial b_i \partial b_j}=0$ for all $i\neq j$. Hence, the Hessian matrix \textbf{H}$_{i,j}$ of $\mathbb{P}_o(p^*,\textbf{b})$ w.r.t. $\textbf{b}$ is negative semidefinite, and $\mathbb{P}_o(p^*,\textbf{b})$ is a concave function of $\textbf{b}$. Also, the constraints are linear, which imply that the necessity and sufficiency conditions for optimality exist. The dual Lagrangian function and the \ac{KKT} conditions are then employed to solve \textbf{P2}.
The \ac{KKT} Lagrangian function of the energy minimization problem is given by
\begin{align}
\mathcal{L}(\textbf{b},w_i,\mu_i,v) =& \sum_{i=1}^{N_f} q_i b_i + q_i(1- b_i)(1-e^{-b_i\overline{n}})\mathbb{P}(R_1>R_0) \nonumber \\
&+ v(M-\sum_{i=1}^{N_f} b_i) + \sum_{i=1}^{N_f} w_i (b_i-1) - \sum_{i=1}^{N_f} \mu_i b_i
\end{align}
where $v, w_i, \mu_i$ are the dual equality and two inequality constraints, respectively. Now, the optimality conditions are written as		
\begin{align}
\label{grad}
\grad_{\textbf{b}} \mathcal{L}(\textbf{b}^*,w_i^*,\mu_i^*,v^*) = q_i + q_i&\big(\overline{n}(1-b_i^*)e^{-\overline{n}b_i^*}-(1-e^{-\overline{n}b_i^*})\big)\mathbb{P}(R_1>R_0) -v^* + w_i^* -\mu_i^*= 0 \\
 &w_i^* \geq 0 \\
 &\mu_i^* \leq 0 \\
 &w_i^* (b_i^* - 1) =0 \\
&\mu_i^* b_i^* = 0\\
 &(M-\sum_{i=1}^{N_f} b_i^*) = 0 
\end{align}

\begin{enumerate}
  \item $w_i^* > 0$: We have $b_i^* = 1$, $\mu_i^*=0$, and
  \begin{align}
&q_i -q_i(1-e^{-\overline{n}})\mathbb{P}(R_1>R_0)= v^* - w_i^* \nonumber \\
\label{cond1_offload}
&v^{*} <   q_i -q_i(1-e^{-\overline{n}})\mathbb{P}(R_1>R_0)
 \end{align}
  \item $\mu_i^* < 0$: We have $b_i^* = 0$, and $w_i^*=0$, and
  \begin{align}
  & q_i + \overline{n}q_i\mathbb{P}(R_1>R_0) = v^* + \mu_i^*  \nonumber \\  
  \label{cond2_offload}
  &v^{*} > q_i + \overline{n}q_i\mathbb{P}(R_1>R_0)
  \end{align}
     \item $0 <b_i^*<1$: We have $w_i^*=\mu_i^*=0$, and
  \begin{align}
\label{psii_offload}
   v^{*} =   q_i  +  q_i\big(\overline{n}(1-b_i^*)e^{-\overline{n}b_i^*} - (1 - e^{-\overline{n}b_i^*})\big)\mathbb{P}(R_1>R_0)
\end{align}
\end{enumerate}
By combining (\ref{cond1_offload}), (\ref{cond2_offload}), and (\ref{psii_offload}), with the fact that $\sum_{i=1}^{N_f} b_i^*=M$, Lemma 3 is proven.
\section {Proof of lemma 6}

Under the assumption of one active \ac{D2D} link within a cluster, there is no intra-cluster interference. Also, the Laplace transform of the inter-cluster interference is similar to that of the \ac{PPP} \cite{andrews2011tractable} whose density is the same as that of the parent PPP. In fact, this is true according to the displacement theory of the \ac{PPP} \cite{daley2007introduction}, where each interferer is a point of a \ac{PPP} that is displaced randomly and independently of all other points. For the sake of completeness, we prove it here. Starting from the third line of the proof of Lemma 1, we get
\begin{align}
 \mathscr{L}_{I_{\Phi_p^{!}}}(s) &\overset{(a)}{=}  \text{exp}\Bigg(-2\pi\lambda_p \mathbb{E}_{g_{u}} \int_{v=0}^{\infty}\mathbb{E}_{u|v}\Big[1 -
 e^{-s  P_d g_{u}  u^{-\alpha}}  \Big]v\dd{v}\Bigg), \nonumber \\
&= \text{exp}\Big(-2\pi\lambda_p \mathbb{E}_{g_{u}} \big[\int_{v=0}^{\infty}\int_{u=0}^{\infty}\big(1 - e^{-s  P_d g_{u}  u^{-\alpha}}  \big)f_U(u|v)\dd{u}v\dd{v}\big]\Big)   \nonumber \\
 \label{prelaplace}
&\overset{(b)}{=} \text{exp}\Bigg(-2\pi\lambda_p \Eb_{g_{u}} \underbrace{\int_{v=0}^{\infty}v\dd{v} - \int_{v=0}^{\infty}\int_{u=0}^{\infty}  e^{-s  P_d g_{u}  u^{-\alpha}} f_{U}(u|v)\dd{u} v \dd{v}}_{\mathcal{R}(s,\alpha)}\Bigg)  
\end{align}
where (a) follows from the \ac{PGFL} of the parent \ac{PPP} \cite{andrews2011tractable}, and (b) follows from $\int_{u=0}^{\infty} f_{U}(u|v)\dd{u} =1$. Now, we proceed by calculating the integrands of $\mathcal{R}(s,\alpha)$ as follows.
\begin{align}
\mathcal{R}(s,\alpha)&\overset{(c)}{=} \int_{v=0}^{\infty}v\dd{v} - \int_{u=0}^{\infty}e^{-s  P_d g_{u}  u^{-\alpha}}
\int_{v=0}^{\infty} f_{U}(u|v)v \dd{v}\dd{u}\nonumber \\
&\overset{(d)}{=} \int_{v=0}^{\infty}v\dd{v} - \int_{u=0}^{\infty}e^{-s  P_d g_{u}  u^{-\alpha}}u\dd{u} \nonumber \\
&\overset{(e)}{=}   \int_{u=0}^{\infty}(1 - e^{-s  P_d g_{u}  u^{-\alpha}})u\dd{u} \nonumber \\
&\overset{(f)}{=}   \frac{(s P_d g_{u})^{2/\alpha}}{\alpha} \int_{u=0}^{\infty}(1 - e^{-t})t^{-1-\frac{2}{\alpha}}du  \nonumber \\
 \label{laplaceppp1}
 &\overset{(g)}{=} \frac{(s P_d)^{2/\alpha}}{2} g_{u}^{2/\alpha} \Gamma(1 + 2/\alpha),	 	
 \end{align}
 where (c) follows from changing the order of integration, (d) follows from $ \int_{v=0}^{\infty} f_{U}(u|v)v\dd{v} = u$, (e) follows from changing the dummy variable $v$ to $u$, (f) follows from changing the variables $t=s g_{u}u^{-\alpha}$, and (g) follows from solving the integration of (f) by parts. Substituting the obtained value for $\mathcal{R}(s,\alpha)$ into (\ref{prelaplace}), and taking the expectation over the exponential random variable $g_u$, with the fact that $\Eb_{g_{u}} [g_{u}^{2/\alpha}] = \Gamma(1 - 2/\alpha)$, we get
\begin{align}
\label{laplace_trans}
\mathscr{L}_{I_{\Phi_p^{!}}} (s)&= {\rm exp}\Big(-\pi\lambda_p (sP_d )^{2/\alpha} \Gamma(1 + 2/\alpha)\Gamma(1 - 2/\alpha)\Big),
\end{align}
Substituting this expression with the distance \ac{PDF} $f_R(r)$ into the coverage probability equation yields
 \begin{align} 
\pcd &=\int_{r=0}^{\infty}
  {\rm e}^{-\pi\lambda_p (sP_d)^{2/\alpha} \Gamma(1 + 2/\alpha)\Gamma(1 - 2/\alpha)}\frac{r}{2\sigma^2}{\rm e}^{\frac{-r^2}{4\sigma^2}}  {\rm dr} , \nonumber \\
  &\overset{(h)}{=}\int_{r=0}^{\infty} \frac{r}{2\sigma^2} 
   {\rm e}^{-\pi\lambda_p \theta^{2/\alpha}r^{2} \Gamma(1 + 2/\alpha)\Gamma(1 - 2/\alpha)}{\rm e}^{\frac{-r^2}{4\sigma^2}}  {\rm dr} , \nonumber \\
   &\overset{(i)}{=}\int_{r=0}^{\infty} \frac{r}{2\sigma^2} 
   {\rm e}^{-r^2Z(\theta,\sigma,\alpha)}  {\rm dr} , \nonumber \\
   &\overset{}{=} \frac{1}{4\sigma^2 Z(\theta,\alpha,\sigma)}
 \end{align}
 where (h) comes from the substitution $s = \frac{\theta r^{\alpha}}{P_d}$, and (i) from $Z(\theta,\alpha,\sigma) = (\pi\lambda_p \theta^{2/\alpha}\Gamma(1 + 2/\alpha)\Gamma(1 - 2/\alpha)+ \frac{1}{4\sigma^2})$.
 
\end{appendices}
\bibliographystyle{IEEEtran}

\bibliography{bibliography}
\end{document}